\documentclass[journal]{IEEEtran}
\usepackage{amsmath}
\usepackage{amssymb}
\usepackage{amsthm}
\usepackage{algorithm}
\usepackage{algorithmic}
\usepackage{color}
\usepackage{xcolor}
\usepackage{caption}
\usepackage{epsfig,latexsym}
\usepackage{float}
\usepackage{fancyhdr}
\usepackage{graphicx}
\usepackage{indentfirst}
\usepackage{mdwmath}
\usepackage{mdwtab}
\usepackage{subfigure}
\usepackage{setspace}
\usepackage{stfloats}
\usepackage{times}
\usepackage{url}
\usepackage{multirow}
\usepackage{booktabs}

\usepackage[colorlinks=true, allcolors=blue]{hyperref}

\newtheorem{remark}{Remark}
\newtheorem{theorem}{Theorem}

\newtheorem{lemma}{Lemma}

\newtheorem{corollary}{Corollary}

\newtheorem{proposition}{Proposition}

\begin{document}

\title{On the Performance of Fluid Antenna Systems under Block-Diagonal Correlation Model}
\author{Jiangsheng Huangfu, Zhengyu Song,~\IEEEmembership{Member,~IEEE}, \\
Tianwei Hou,~\IEEEmembership{Member,~IEEE}, Anna Li,~\IEEEmembership{Member,~IEEE}, and Kai-Kit Wong,~\IEEEmembership{Fellow,~IEEE}.
 \thanks{This work was supported in part by the Beijing Natural Science Foundation L232041. ({\em Corresponding author: Tianwei Hou}).}
\thanks{Jiangsheng Huangfu and Zhengyu Song are with the School of Electronic and Information Engineering, Beijing Jiaotong University, Beijing 100044, China (e-mail: 25110117@bjtu.edu.cn; songzy@bjtu.edu.cn; twhou@bjtu.edu.cn).}
\thanks{Tianwei Hou is with the Beijing Key Laboratory of Convergent Communications and Networking Technologies in LEO Satellite Systems, and also with the State Key Laboratory of Networking and Switching Technology, Beijing University of Posts and Telecommunications, Beijing 100876, China (email: htw@bupt.edu.cn).}
\thanks{Anna Li is with the School of Computing and Communications, Lancaster University, Lancaster LA1 4WA, U.K. (e-mail: a.li16@lancaster.ac.uk).}
\thanks{K. K. Wong is affiliated with the Department of Electronic and Electrical Engineering, University College London, Torrington Place, WC1E 7JE, United Kingdom and he is also affiliated with the Department of Electronic Engineering, Kyung Hee University, Yongin-si, Gyeonggi-do 17104, Korea (e-mail: kai-kit.wong@ucl.ac.uk).}
 }

\vspace{-8mm}

\maketitle

\begin{abstract}
Fluid antenna systems (FASs) have recently emerged as a promising reconfigurable antenna technology for future wireless networks, owing to their unique ability to exploit fine-grained spatial channel variations within a compact aperture. In this paper, a single-input multiple-output (SIMO) FAS employing maximum-ratio combining (MRC) is investigated under the block-diagonal correlation model, where the ports of FAS are partitioned into independent blocks and the strongest port within each block is selected for MRC combining. Exact outage probability (OP) expressions are first derived in both convolution and characteristic-function forms. To gain further insights, closed-form high-SNR asymptotic expressions are developed, from which the diversity order is shown to approximate the number of ports. This result reveals that block partitioning influences only the coding gain and can therefore be optimized without compromising the diversity performance. For the ergodic rate (ER), a Gamma-matching approximation together with a tighter Jensen-based approximation is derived in closed form. Simulation results corroborate the analytical framework and demonstrate that: i) increasing either the number of ports or the number of blocks improves the system performance; ii) the diversity order depends solely on the number of ports; and iii) the proposed SIMO-FAS achieves comparable or superior outage performance to conventional MRC receivers despite employing fewer combining branches.
\end{abstract}

\begin{IEEEkeywords}

Block-diagonal correlation model, diversity, fluid antenna system (FAS),  maximum ratio combining, outage probability, spatial correlation.
\end{IEEEkeywords}

\section{Introduction}

The forthcoming sixth-generation (6G) wireless systems are expected to support ultra-high data rates, massive connectivity, and stringent reliability requirements \cite{6G1,6G2,6G3,6G4}. To meet these demands, a series of breakthrough physical-layer technologies have been proposed and widely investigated. Among them, fluid antenna systems (FASs) have emerged as a promising reconfigurable-antenna paradigm for 6G, attracting growing research interest \textcolor{blue}{\cite{fas1,fas2,fas3}}.

Unlike traditional antenna systems (TASs) with fixed antenna positions, FAS represents a broad class of antenna architectures that can alter their shapes, positions, or other physical configurations to reconfigure their gain pattern, radiation characteristics, and spatial response \cite{fas4,fas,new2025fluid}. By exploiting this reconfigurability, FAS introduces new degrees of freedom in the spatial domain, enabling the system to adaptively maximize signal strength by selecting the most favorable port locations within a compact aperture. 
Owing to its flexibility, FAS can be deployed at either the base station (BS) or the user equipment (UE), or at both ends simultaneously, making it readily compatible with existing wireless technologies such as reconfigurable intelligent surfaces (RIS) \cite{RIS,RIS1,RIS2} and non-orthogonal multiple access (NOMA) \cite{NOMA1,NOMA2,NOMA3,NOMA4}. Furthermore, the unique spatial reconfigurability of FAS has also inspired new communication paradigms, among which fluid antenna multiple access (FAMA) has recently emerged as a promising research direction \cite{hong2026fluid}.

The core advantage of FAS lies in its flexibility to reconfigure spatial diversity gains without expanding the physical size of the antenna array \cite{tutorial,new2024tutorial}. To realize this reconfigurability, several practical implementation approaches have been explored, including mechanically movable antennas driven by precision actuators \cite{mech1,mech2,mech3,mech4}, liquid-based antennas that exploit the deformability of conductive fluids \cite{liq1,liq2,liq3}, and pixel-reconfigurable antennas realized via electronically switchable radiating elements \cite{pixel1,pixel2,pixel3}. The design principles and structural trade-offs of these implementations have been comprehensively reviewed in \cite{Tong2025}. Moreover, several FAS prototypes have already been developed and experimentally demonstrated, validating the practical feasibility of this technology \cite{pro1,pro2,pro3,pro4}.

Due to the close spacing of ports within a compact aperture, the channel coefficients among neighboring ports exhibit significant spatial correlation, which makes performance analysis particularly challenging. Motivated by this, substantial efforts have been devoted to FAS channel modeling and performance characterization.
Early works on FAS performance analysis were largely focused on single-input single-output (SISO) settings with FAS deployed only at the UE side, relying on a simple yet insightful parametric channel model \cite{tutorial}. Under this framework, the outage probability (OP) under Rayleigh fading was first studied in \cite{fas}, and the ergodic capacity limits were characterized in \cite{FAS_ER}. Extending to more general fading environments, \cite{nakagami1} investigated the OP under Nakagami-$m$ fading, and \cite{nakagami2} provided a closed-form approximate expression under the same model. Exact expressions together with closed-form upper and lower bounds for the OP, as well as the ergodic rate (ER), were derived under Rician fading in \cite{rician}. The system performance under $\alpha$-$\mu$ fading was further examined in \cite{Alvim2024}.

A key difficulty in analyzing FAS performance stems from the need to obtain the multivariate joint cumulative distribution function (CDF) of channel gains over all available ports. However, such an exact characterization generally leads to high-dimensional nested integrals, rendering the resulting expressions analytically intractable. Early works resorted to a simple parametric channel model \cite{Beaulieu2011}, which failed to accurately reproduce the spatial correlation structure prescribed by Jakes' model, leading to overly optimistic performance predictions. To address this limitation, a two-stage approximation method was proposed in \cite{Khammassi2023} to more faithfully characterize the inter-port correlation under Jakes' model, achieving improved accuracy while retaining analytical tractability. Along a parallel line, the Gaussian copula framework \cite{copula1, copula2, copula3} has been adopted to model the joint distribution of correlated FAS channel envelopes, enabling closed-form derivations of performance metrics. Building upon these foundations, a block-diagonal correlation model was proposed in \cite{block}, which captures the dominant correlation structure of FAS channels by grouping ports into independent blocks, each governed by a uniform intra-block correlation coefficient. This model strikes an effective balance between modeling fidelity and mathematical tractability, as the block structure renders the joint distribution of port gains separable, thereby circumventing the intractability of high-dimensional integration that plagues exact analysis. The effectiveness of this model has since been validated in a variety of system configurations \cite{block1, block2, block3}.

Despite these advances, most existing works on FAS performance analysis have focused on configurations where only the UE side is equipped with FAS and a single port is activated for transmission, i.e., the SISO-FAS setting. However, practical FAS deployments may benefit from simultaneously activating multiple ports to exploit spatial diversity more aggressively, particularly in scenarios where a single port may not suffice to guarantee reliability requirements. The performance of such multi-port activation schemes remains relatively underexplored. Furthermore, while the block-diagonal correlation model has demonstrated its effectiveness in FAS performance analysis, works employing this model are still limited, and its application to the multi-port activation setting has received limited attention.

Motivated by these observations, this work considers a FAS-enabled SIMO system, in which the UE employs a FAS capable of simultaneously activating multiple ports. Based on the block-diagonal correlation model, we develop a tractable analytical framework, from which exact expressions for the OP and ER of the considered SIMO-FAS system are derived. Asymptotic analyses in the high-SNR regime are further conducted to unveil the achievable diversity gain and multiplexing gain, yielding both rigorous performance benchmarks and explicit design insights.

The major technical contributions of this work are summarized as follows.
\begin{itemize}
  \item We consider a SIMO system in which a BS equipped with a fixed-position antenna communicates with a UE equipped with a FAS with multiple activated ports. By adopting the block-diagonal correlation model that partitions the ports into independent blocks, an exact integral expression for the CDF of the selected-port SNR within each block is derived.

  \item Leveraging the statistical independence across different blocks, exact OP expressions for the considered SIMO-FAS are obtained in both convolution and characteristic-function forms. To gain deeper insights, closed-form high-SNR asymptotic expressions for the OP are derived. A Gamma-matching approximation is further developed. Based on the asymptotic results, the diversity order is obtained and shown to depend solely on the number of ports, irrespective of the number of blocks.

  \item For the ER analysis, we first derive an exact expression, based on which a Gamma-matching approximation and a tighter Jensen-based approximation are obtained. For the special case where each block contains a single port, exact closed-form expressions for both the OP and the ER are derived.

  \item Simulation results confirm that: 1) increasing either the number of ports or the number of blocks reduces the OP, while also improving the ER; 2) the diversity order depends solely on the number of ports; and 3) the proposed SIMO-FAS achieves comparable or superior reliability to conventional MRC receivers despite using fewer combining branches.
\end{itemize}

\subsection{Organization and Notations}

This paper proceeds as follows. Section II presents the underlying system configuration together with the adopted spatial correlation model. In Section III, the OP and ER are characterized through exact analysis and high-SNR asymptotic approximations. Section IV reports numerical results and examines the impact of key system parameters. The conclusion is given in Section V.

The major symbols and their definitions are summarized in Table~\ref{tab:notation}. \vspace{-1mm}

\begin{table}[h!]
\centering
\caption{Major notations summary.}
\resizebox{0.48\textwidth}{!}{%
\begin{tabular}{lc}
\toprule
Symbol & Definition \\
\midrule
\(\mathbb{E}\{\cdot\}\), \(\Pr(\cdot)\) & Expectation and probability operators \\
\(\Re(\cdot)\), \(\Im(\cdot)\) & Real and imaginary parts \\
\(|\cdot|\) & Magnitude or determinant \\
\(\|\cdot\|\) & Euclidean norm \\
\((\cdot)^T\), \((\cdot)^H\) & Transpose and conjugate transpose \\
\(\mathbf{I}_N\) & \(N \times N\) identity matrix \\
\(\mathbf{0}\) & Zero matrix or vector \\
\(\mathcal{CN}(\boldsymbol{\mu}, \mathbf{\Sigma})\) & Complex Gaussian distribution \\
\(\mathcal{N}(\mu, \sigma^2)\) & Real Gaussian distribution \\
\(\sim\) & Distributed as \\
\(\Gamma(\cdot)\) & Gamma function \\
\(J_0(\cdot)\) & Zeroth-order Bessel function of the first kind\\
\(Q_1(\cdot,\cdot)\) & First-order Marcum \(Q\)-function \\
\(E_n(\cdot)\) & Generalized exponential integral \\
\bottomrule
\end{tabular}%
}
\label{tab:notation}
\end{table}

\section{System and Channel Model}\label{sec:model}

\begin{figure}[t!]
\centering
\includegraphics[width=3.5in]{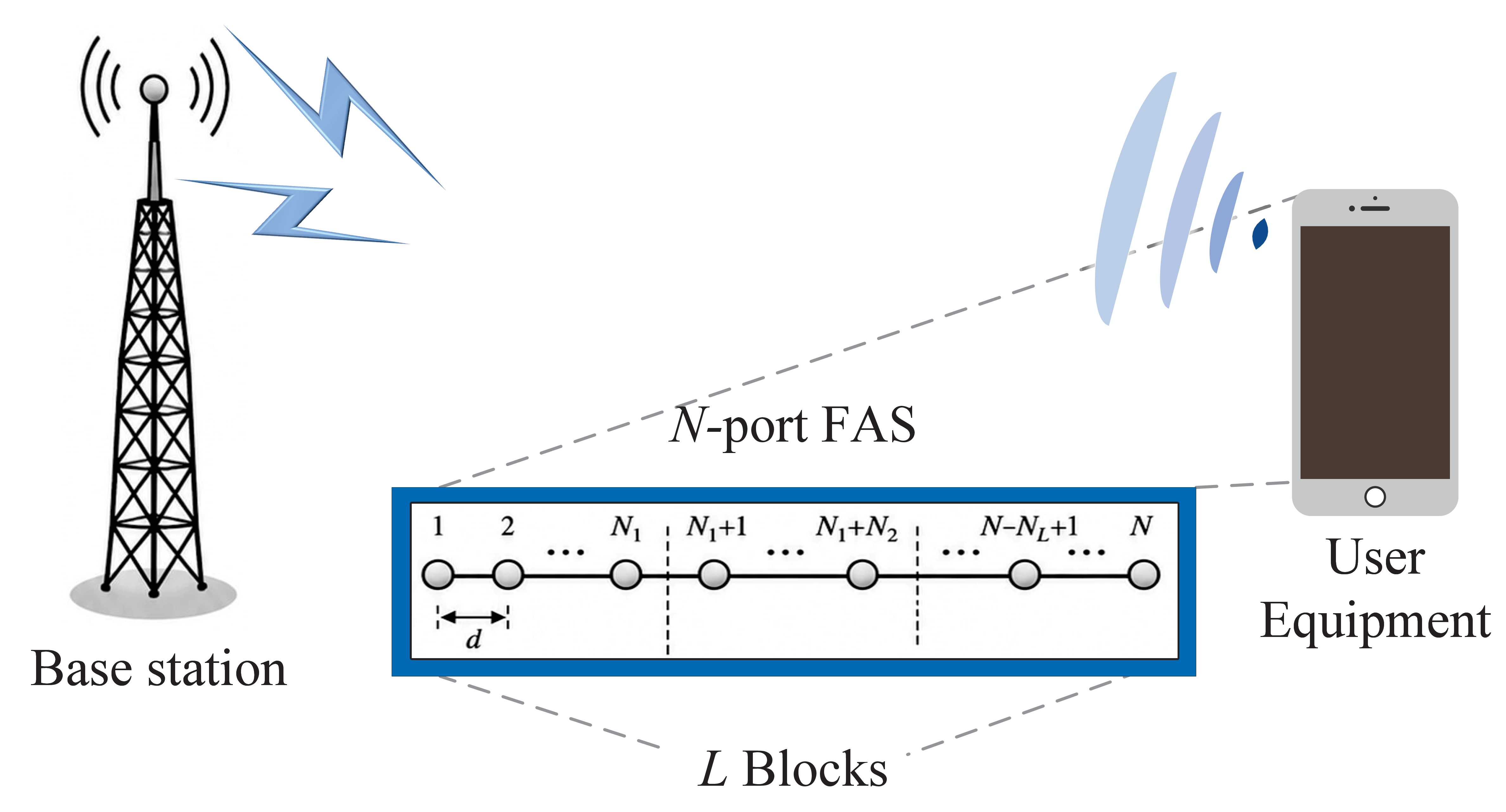}
\caption{SIMO-FAS under block-diagonal correlation model.}\label{sys}
\vspace{-3mm}
\end{figure}

This section introduces the system and channel model of the SIMO-FAS. We consider a communication scenario where the BS is equipped with a single fixed-position transmit antenna and the UE is equipped with a FAS comprising $N$ ports. The FAS aperture has a physical size of $W\lambda$, where $W$ denotes the normalized size of FAS and $\lambda$ is the wavelength. At the receiver, $L$ ports are simultaneously activated, and the BS transmits a single data stream to these $L$ ports.

Due to the dense spacing among ports within the FAS aperture, significant inter-port correlation arises. To characterize this correlation, we adopt the widely used Jakes' model \cite{fas}, under which the spatial correlation matrix of all $N$ ports is given by
\begin{equation}\label{corre1}
\mathbf{\Sigma}  \in  {\mathbb{R}}^{N \times  N}=
\begin{bmatrix}
\mu_{0}      & \mu_{1}      & \mu_{2}      & \cdots & \mu_{N-1} \\
\mu_{-1}     & \mu_{0}      & \mu_{1}      & \cdots & \mu_{N-2} \\
\vdots       & \vdots       & \ddots       & \ddots & \vdots \\
\mu_{-(N-1)} & \mu_{-(N-2)} & \cdots       & \mu_{-1} & \mu_{0}
\end{bmatrix},
\end{equation}
where $\mu_{n}$ denotes the spatial correlation coefficient, which is given by
\begin{equation}\label{corre2}
{\mu _n} = {J_0}\left( {\frac{{2\pi nW}}{{N - 1}}} \right) \equiv {J_0}\left( {2\pi nd} \right),
\end{equation}
where $J_0(\cdot)$ is the zeroth-order Bessel function of the first kind, and $d = \frac{W}{{N - 1}}$ denotes the distance between two neighboring ports. It can be observed that the resulting full correlation matrix $\mathbf{\Sigma}$ is a Toeplitz matrix, whose entries depend only on the relative port separation.

While the Jakes' correlation model accurately captures the physical inter-port correlation structure, the resulting full Toeplitz correlation matrix renders exact performance analysis intractable, particularly when multiple ports are simultaneously activated. To facilitate analytical tractability while preserving the essential correlation structure, we adopt the block-diagonal correlation model proposed in \cite{block}, in which the spatial correlation matrix $\mathbf{\Sigma}$ is approximated by a block-diagonal structure. 

As illustrated in Fig.~\ref{sys}, the $N$ ports are partitioned into $L$ blocks, where the $l$-th block contains $N_l$ ports with $\sum_{l=1}^{L} N_l = N$. Ports within the same block are modeled as correlated, while ports belonging to different blocks are treated as mutually independent. Specifically, for each of the $L$ blocks, the port achieving the maximum instantaneous SNR is selected, after which the selected ports from all blocks are combined through MRC at the receiver.

Under the block-diagonal correlation model, the approximate correlation matrix $\hat{\mathbf{\Sigma}}$ takes a block-diagonal form, given by
\begin{equation}\label{block1}
\mathbf{\hat{\Sigma}} \in \mathbb{R}^{N \times N}
=
\begin{bmatrix}
\mathbf{A}_{1} & \mathbf{0} & \cdots & \mathbf{0} \\
\mathbf{0} & \mathbf{A}_{2} & \cdots & \mathbf{0} \\
\vdots & \vdots & \ddots & \vdots \\
\mathbf{0} & \mathbf{0} & \cdots & \mathbf{A}_{L}
\end{bmatrix},
\end{equation}
where $\mathbf{A}_l \in \mathbb{R}^{N_l \times N_l}$ denotes the intra-block correlation matrix of the $l$-th block, which is modeled as an equicorrelation matrix of the form
\begin{equation}\label{block2}
\mathbf{A}_l =
\begin{bmatrix}
1 & \mu_l^2 & \cdots & \mu_l^2 \\
\mu_l^2 & 1 & \cdots & \mu_l^2 \\
\vdots & \vdots & \ddots & \vdots \\
\mu_l^2 & \mu_l^2 & \cdots & 1
\end{bmatrix},
\end{equation}
and $\mu_l \in [0,1)$ is the intra-block correlation coefficient of the $l$-th block. This equicorrelation model indicates that the ports belonging to the same block exhibit an identical correlation level, which aligns with the assumption that the spatial correlation can be regarded as approximately invariant within each block.

In this paper, the small-scale fading is considered to be Rayleigh fading. With the intra-block correlation matrix, the channels at the ports of the $l$-th block can be parameterized as
\begin{equation}\label{channel model-param}
h_{ln} =
\begin{cases}
\sigma x_{l0} + j\sigma y_{l0}, & n = 1,\\
\sigma\!\left(\sqrt{1-\mu_l^2}\,x_{ln} + \mu_l x_{l0}\right)\\
~+ j\sigma\!\left(\sqrt{1-\mu_l^2}\,y_{ln} + \mu_l y_{l0}\right), 
&n = 2,\ldots,N_l,
\end{cases}
\end{equation}
where $x_{ln}, y_{ln}, x_{l0}, y_{l0} \sim \mathcal{N}(0, \frac{1}{2})$ are mutually independent 
real Gaussian random variables, with $x_{l0}$ and $y_{l0}$ being the common components 
shared across all ports in block $l$.

Collecting the port channel coefficients of all blocks, the channel vector is given by
\begin{equation}\label{eq:channel_vector}
\mathbf{h} = [h_{11},\ldots,h_{1N_1},\, h_{21},\ldots,h_{2N_2},\,\ldots,\, h_{L1},\ldots,h_{LN_L}]^{T},
\end{equation}
whose spatial correlation matrix satisfies $\mathbb{E}\{\mathbf{h}\mathbf{h}^{H}\} = \sigma^2 \hat{\mathbf{\Sigma}}$, with $\hat{\mathbf{\Sigma}}$ defined in \eqref{block1}, and $\sigma^2$ denotes the average power of each channel coefficient. The block-diagonal structure of $\hat{\mathbf{\Sigma}}$ confirms that ports across different blocks are mutually independent, while ports within the same block are correlated according to the equicorrelation model in \eqref{block2}.

Based on the channel model, the received signal at the $n$-th port of the $l$-th block is given by
\begin{equation}\label{eq:received_signal_port}
y_{ln} = \sqrt{P}\, h_{ln}\, s + \eta_{ln}, 
\end{equation}
where $s$ denotes the transmitted information symbol with $\mathbb{E}\{|s|^2\} = 1$, $P$ is the transmit power, and $\eta_{ln} \sim \mathcal{CN}(0, \sigma_\eta^2)$ is the additive white Gaussian noise (AWGN) at the corresponding port. Thus, the received signal vector is given by
\begin{equation}\label{eq:received_signal_vector}
\mathbf{y} = \sqrt{P}\, \mathbf{h}\, s + \boldsymbol{\eta},
\end{equation}
where $\mathbf{y} \in \mathbb{C}^{N \times 1}$ and $\boldsymbol{\eta} \sim \mathcal{CN}(\mathbf{0}, \sigma_\eta^2 \mathbf{I}_N)$.

The instantaneous SNR at the $n$-th port of the $l$-th block $\gamma_{ln}$ is defined as
\begin{equation}\label{1}
\gamma_{ln} = \frac{P\, |h_{ln}|^2}{\sigma_\eta^2}.
\end{equation}
Since $h_{ln} \sim \mathcal{CN}(0,\sigma^2)$, the squared envelope $|h_{ln}|^2$ follows an exponential distribution with mean $\sigma^2$. Consequently, the instantaneous SNR $\gamma_{ln}$ is exponentially distributed with PDF
\begin{equation}
f_{\gamma_{ln}}(r) = \frac{1}{\bar{\gamma}}\, e^{-{r}/{\bar{\gamma}}}, \quad r \geq 0,
\end{equation}
where $\bar{\gamma} = P\sigma^2/\sigma_\eta^2$ denotes the average received SNR.

\section{Performance analysis}
In this section, we develop the performance analysis of the considered SIMO-FAS under the block-diagonal correlation model. 

\subsection{OP Analysis}

We begin with the OP analysis. 
The analysis starts with the derivation of the distribution associated with the selected-port SNR within a single block.
For the $l$-th block, the port experiencing the largest instantaneous SNR is selected, and the corresponding selected-port SNR is given by
\begin{equation}\label{max}
\gamma_l = \max\{\gamma_{l1},\ldots,\gamma_{lN_l}\}.
\end{equation}
Since ports across different blocks are mutually independent under the block-diagonal 
correlation model, the $L$ selected SNRs $\{\gamma_l\}_{l=1}^{L}$ are statistically decoupled, 
and the MRC output SNR is given by
\begin{equation}\label{eq:mrc_snr_sum}
\gamma_{\mathrm{MRC}} = \sum_{l=1}^{L} \gamma_l.
\end{equation}

Let $R$ denote the target transmission rate in bps/Hz, so that the corresponding 
SNR threshold is $\gamma_{\mathrm{th}} = 2^R - 1$. The outage event is defined as
\begin{equation}\label{eq:outage_event}
\mathcal{E} = \left\{\log_2\left(1+\gamma_{\mathrm{MRC}}\right) \le R\right\}
            = \left\{\gamma_{\mathrm{MRC}} \le \gamma_{\mathrm{th}}\right\},
\end{equation}
and the OP is accordingly given by
\begin{equation}\label{eq:outage_definition}
P_{\mathrm{out}}\left(\gamma_{\mathrm{th}}\right) 
= \Pr\!\left\{\gamma_{\mathrm{MRC}} \le \gamma_{\mathrm{th}}\right\} 
= F_{\gamma_{\mathrm{MRC}}}\left(\gamma_{\mathrm{th}}\right).
\end{equation}
Therefore, deriving the OP reduces to 
characterizing the CDF of each selected-port SNR $\gamma_l$ and subsequently obtaining the 
distribution of their sum.

\begin{lemma}\label{lemma:exact_cdf_single}
For the $l$-th block, the exact CDF of the selected-port SNR 
$\gamma_l$ is given by
\begin{equation}\label{eq:exact_cdf_single}
F_{\gamma_l}(x) = \int_{0}^{\frac{x}{\bar{\gamma}}} e^{-t} 
\prod_{n=2}^{N_l} \left[1 - Q_1\!\left(\alpha_l\sqrt{t},\, 
\beta_l\sqrt{\frac{x}{\bar{\gamma}}}\right)\right] dt,
\end{equation}
where $Q_1(\cdot,\cdot)$ is the first-order Marcum Q-function~\cite{marcumq}, 
$\alpha_l = \sqrt{\dfrac{2\mu_l^2}{1-\mu_l^2}}$, and 
$\beta_l = \sqrt{\dfrac{2}{1-\mu_l^2}}$.
\begin{proof}
Equation~\eqref{eq:exact_cdf_single} follows directly from \cite{fas} after 
the change of variable.
\end{proof}
\end{lemma}

The exact PDF of selected-port SNR can in principle be obtained by differentiating \eqref{eq:exact_cdf_single} with respect to $x$. However, since both the integration limit and the Marcum-$Q$ function arguments depend on $x$, the resulting expression is cumbersome and does not admit a useful closed-form. 
Thus, we retain the integral representation in~\eqref{eq:exact_cdf_single} and characterize the exact OP via either an $L$-fold convolution or a transform-domain approach.

\begin{theorem}\label{exop}
The exact OP of the SIMO-FAS system under the block-diagonal correlation model 
can be expressed as
\begin{equation}\label{eq:exact_mrc_convolution}
P_{\mathrm{out}}\left(\gamma_{\mathrm{th}}\right) 
= \int\!\cdots\!\int_{\sum_{l=1}^{L}\gamma_l \le \gamma_{\mathrm{th}}} 
\prod_{l=1}^{L} f_{\gamma_l}\left(\gamma_l\right)\,d\gamma_1\cdots d\gamma_L,
\end{equation}
which is an $L$-fold convolution integral that follows directly from the mutual 
independence of $\{\gamma_l\}_{l=1}^{L}$. Although exact, \eqref{eq:exact_mrc_convolution} becomes computationally prohibitive as the number of blocks $L$ grows. To avoid multidimensional integration, 
an equivalent representation in the frequency domain can be obtained by exploiting 
the characteristic function (CF). Specifically, let
\begin{equation}\label{eq:cf_single_block}
\Phi_{\gamma_l}\left(\omega\right) = \mathbb{E}\left[e^{j\omega\gamma_l}\right] 
= \int_{0}^{\infty} e^{j\omega x} f_{\gamma_l}\left(x\right)\,dx,
\end{equation}
denote the CF of selected-port SNR $\gamma_l$. By block independence, the CF of the MRC output is
\begin{equation}\label{eq:cf_mrc}
\Phi_{\mathrm{MRC}}\left(\omega\right) = \prod_{l=1}^{L} \Phi_{\gamma_l}\left(\omega\right),
\end{equation}
and invoking the Gil-Pelaez inversion theorem yields the exact OP as a 
single one-dimensional integral
\begin{equation}\label{eq:exact_mrc_gil_pelaez}
P_{\mathrm{out}}\left(\gamma_{\mathrm{th}}\right) 
= \frac{1}{2} - \frac{1}{\pi}\int_{0}^{\infty} \frac{1}{\omega}\,
\mathrm{Im}\!\left[e^{-j\omega\gamma_{\mathrm{th}}} 
\prod_{l=1}^{L}\Phi_{\gamma_l}\left(\omega\right)\right]d\omega.
\end{equation}
\end{theorem}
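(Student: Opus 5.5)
The plan is to treat this as three linked claims: the convolution form, the factorization of the characteristic function, and the Gil-Pelaez inversion. The common starting point is the block-diagonal structure of $\hat{\mathbf{\Sigma}}$ in \eqref{block1} combined with the parameterization \eqref{channel model-param}. Every $h_{ln}$ in block $l$ is a measurable function only of the Gaussian variables $\{x_{l0},y_{l0},x_{ln},y_{ln}\}_{n}$. These variable families are disjoint and mutually independent across $l$. Hence the random vectors $(\gamma_{l1},\dots,\gamma_{lN_l})$, $l=1,\dots,L$, are independent. Since $\gamma_l$ in \eqref{max} is a Borel function of the $l$-th vector, the selected SNRs $\gamma_1,\dots,\gamma_L$ are mutually independent.

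Next I would verify that each $\gamma_l$ has a density $f_{\gamma_l}$. The CDF in Lemma~\ref{lemma:exact_cdf_single} is an integral whose integrand and upper limit are smooth in $x$, so $F_{\gamma_l}$ is absolutely continuous on $(0,\infty)$ with $F_{\gamma_l}(0)=0$. Alternatively, one can bound $\Pr\{\gamma_l\in B\}\le \sum_n \Pr\{\gamma_{ln}\in B\}$, and each $\gamma_{ln}$ is exponential. The joint density of $(\gamma_1,\dots,\gamma_L)$ is then the product $\prod_l f_{\gamma_l}$ by independence. Integrating it over the simplex $\{\sum_l\gamma_l\le\gamma_{\mathrm{th}},\ \gamma_l\ge 0\}$ gives $\Pr\{\gamma_{\mathrm{MRC}}\le\gamma_{\mathrm{th}}\}$ by \eqref{eq:mrc_snr_sum} and \eqref{eq:outage_definition}. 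This is \eqref{eq:exact_mrc_convolution}, and equivalently $f_{\gamma_{\mathrm{MRC}}}=f_{\gamma_1}*\cdots*f_{\gamma_L}$.

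For the transform form, independence gives $\mathbb{E}[e^{j\omega\sum_l\gamma_l}]=\prod_l\mathbb{E}[e^{j\omega\gamma_l}]$, which is \eqref{eq:cf_mrc}. The product of bounded functions needs no integrability beyond $|e^{j\omega\gamma_l}|=1$. I would then quote the Gil-Pelaez theorem: for any random variable $X$ with CF $\Phi$ and any continuity point $x$ of $F_X$,
\begin{equation*}
F_X(x)=\frac12-\frac1\pi\int_0^\infty \frac{\mathrm{Im}\left[e^{-j\omega x}\Phi(\omega)\right]}{\omega}\,d\omega .
\end{equation*}
Here $X=\gamma_{\mathrm{MRC}}$ is absolutely continuous, since it is a sum of independent absolutely continuous variables. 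Every $\gamma_{\mathrm{th}}$ is therefore a continuity point, and substituting \eqref{eq:cf_mrc} yields \eqref{eq:exact_mrc_gil_pelaez}.

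The main obstacle, modest as it is, is justifying the convergence of the inversion integral. Near $\omega=0$ the integrand tends to $\mathbb{E}[\gamma_{\mathrm{MRC}}]-\gamma_{\mathrm{th}}$, which is finite because $\mathbb{E}[\gamma_l]\le N_l\bar\gamma$. At infinity the integral is understood as an improper limit $\lim_{T\to\infty}\int_\epsilon^T$, which is the form in which the standard Gil-Pelaez statement holds. I would cite the theorem in that form rather than re-deriving it, and note that absolute continuity removes the usual half-jump correction term.
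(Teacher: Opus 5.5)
Your proposal is correct and follows the same route as the paper. The paper proves the theorem inline: independence of the $\gamma_l$ across blocks yields the product density integrated over the simplex and the factorized CF, and the Gil-Pelaez inversion then gives the one-dimensional integral. You additionally supply details the paper leaves implicit, namely independence via the disjoint Gaussian families of each block, absolute continuity of $\gamma_l$ via $\Pr\{\gamma_l\in B\}\le\sum_n\Pr\{\gamma_{ln}\in B\}$, and the improper-limit sense of the inversion integral, and all of these are sound.
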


Although \eqref{eq:exact_mrc_convolution} and \eqref{eq:exact_mrc_gil_pelaez} 
characterize the exact OP, they remain analytically intractable, as they conceal the explicit dependence of system performance on key parameters. This motivates the derivation of closed-form high-SNR asymptotic expressions that expose the dominant performance scaling laws of the considered SIMO-FAS.

\begin{lemma}\label{Lemma1:A CDF}
For the high-SNR analysis, the CDF of the selected-port SNR $\gamma_l$ in the $l$-th block is approximated as
\begin{equation}\label{asym_cdf}
F_{\gamma_l}\left(\gamma\right) \approx 
C_l\left(\frac{\gamma}{\bar{\gamma}}\right)^{N_l},
\end{equation}
where $C_l \triangleq \prod_{n=2}^{N_l}\dfrac{1}{1-\mu_l^2}$.
\begin{proof}
See Appendix~A.
\end{proof}
\end{lemma}

\begin{lemma}\label{Lemma2:A PDF}
For the high-SNR analysis, the PDF of selected-port SNR $\gamma_l$ is approximated as
\begin{equation}\label{asym_pdf}
f_{\gamma_l}\left(\gamma\right) \approx 
\frac{C_l N_l}{\bar{\gamma}^{N_l}}\,\gamma^{N_l-1},
\end{equation}
which follows directly by differentiating \eqref{asym_cdf} with respect to 
$\gamma$.
\end{lemma}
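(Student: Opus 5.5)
Lemma~\ref{Lemma2:A PDF} follows from Lemma~\ref{Lemma1:A CDF} by differentiation, so the plan is to make that step precise and to check that differentiating an asymptotic relation is legitimate here. I would read \eqref{asym_cdf} as the statement that $F_{\gamma_l}(\gamma) = C_l(\gamma/\bar{\gamma})^{N_l}\,(1+o(1))$ as $\gamma/\bar{\gamma}\to 0$, i.e., in the high-SNR regime. Under that reading, Lemma~\ref{Lemma2:A PDF} claims the analogous first-order expansion for $f_{\gamma_l}=dF_{\gamma_l}/d\gamma$.

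The formal computation comes first. Applying $\frac{d}{d\gamma}$ to $C_l\bar{\gamma}^{-N_l}\gamma^{N_l}$ gives $C_l N_l\bar{\gamma}^{-N_l}\gamma^{N_l-1}$, which is exactly \eqref{asym_pdf}. To justify this beyond formal differentiation, I would not differentiate the $o(1)$ remainder. Instead I would go back to the exact representation in Lemma~\ref{lemma:exact_cdf_single}, set $z=x/\bar{\gamma}$, and view $F_{\gamma_l}$ as a function $G(z)$. As in Appendix~A, the integrand admits a small-argument expansion: $e^{-t}\approx 1$, and for small arguments $1-Q_1(a,b)\approx \frac{b^2}{2}e^{-a^2/2}\approx \frac{b^2}{2}$ to leading order. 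This gives $G(z)=C_l z^{N_l}+O(z^{N_l+1})$, and the expansion comes from a convergent power series in $z$: the Marcum-$Q$ complement has an entire series expansion in $(a^2,b^2)$, and integrating term by term over $[0,z]$ preserves analyticity.

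The main step is this analyticity. Because $G$ is analytic near $0$ with leading term $C_l z^{N_l}$, term-by-term differentiation is valid. That gives $G'(z)=C_lN_l z^{N_l-1}+O(z^{N_l})$, and the chain rule yields $f_{\gamma_l}(\gamma)=\bar{\gamma}^{-1}G'(\gamma/\bar{\gamma})$, which reproduces \eqref{asym_pdf} at leading order.

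I expect the obstacle to be a careful justification that the series for $\prod_{n}[1-Q_1(\alpha_l\sqrt t,\beta_l\sqrt z)]$ converges uniformly for $t\in[0,z]$. If that proves awkward, a simpler fallback is available: $F_{\gamma_l}$ is smooth, and $\gamma^{N_l}$ has a monotone derivative. Then L'Hôpital's rule applied to $F_{\gamma_l}(\gamma)/\gamma^{N_l}$ transfers the limit $C_l\bar{\gamma}^{-N_l}$ to $f_{\gamma_l}(\gamma)/(N_l\gamma^{N_l-1})$, provided the latter ratio has a limit. The existence of that limit again follows from the leading-order expansion of the exact PDF near zero.
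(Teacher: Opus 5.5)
Your proposal is correct and takes the same route as the paper, which obtains \eqref{asym_pdf} simply by differentiating \eqref{asym_cdf}. Your added observation fills in a justification the paper leaves implicit: the exact CDF of Lemma~\ref{lemma:exact_cdf_single} is an entire function of $z=\gamma/\bar{\gamma}$ of the form $C_l z^{N_l}+O(z^{N_l+1})$, so the derivative of the leading term is the leading term of the derivative. The L'H\^{o}pital fallback is unnecessary and somewhat circular, since it presupposes the PDF expansion you are proving, so you can drop it.
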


\begin{lemma}\label{system_cdf}
For the high-SNR analysis, the CDF of the MRC output SNR $\gamma_{\mathrm{MRC}} = \sum_{l=1}^{L}\gamma_l$ is approximated as
\begin{equation}\label{eq:final_mrc_outage}
F_{\gamma_{\mathrm{MRC}}}\left(\gamma\right) \approx 
\frac{\prod_{l=1}^{L} N_l!}{\left(\sum_{l=1}^{L} N_l\right)!}
\left(\prod_{l=1}^{L} C_l\right)
\left(\frac{\gamma}{\bar{\gamma}}\right)^{\sum_{l=1}^{L} N_l}.
\end{equation}
\begin{proof}
See Appendix~B.
\end{proof}
\end{lemma}

\begin{corollary}\label{corollary:asymptotic_op}
The high-SNR asymptotic OP of the SIMO-FAS system is given by
\begin{equation}\label{eq:asymptotic_op}
P_{\mathrm{out}}^{\infty}\left(\gamma_{\mathrm{th}}\right) \approx
\frac{\prod_{l=1}^{L} N_l!}{\left(\sum_{l=1}^{L} N_l\right)!}
\left(\prod_{l=1}^{L} C_l\right)
\left(\frac{\gamma_{\mathrm{th}}}{\bar{\gamma}}\right)^{N}.
\end{equation}
\begin{proof}
The OP is found by substituting \(\gamma=\gamma_{\mathrm{th}}\)  into the joint CDF in~\eqref{eq:final_mrc_outage}, which completes the proof.
\end{proof}
\end{corollary}

\begin{theorem}\label{theorem:gamma_matching_op}
For the high-SNR analysis, the distribution of the combined SNR $\gamma_{\mathrm{MRC}}$ can be asymptotically matched to a Gamma distribution $\tilde{\gamma}\sim\mathrm{Gamma}(k,\theta)$, with shape parameter $k$ and scale parameter $\theta$ given by
\begin{equation}\label{eq:gamma_params}
k = N, \qquad
\theta = \frac{\bar{\gamma}}{\left(N!\,\Psi\right)^{\frac{1}{N}}},
\end{equation}
where $\Psi \triangleq \dfrac{\prod_{l=1}^{L}N_l!}{N!}\prod_{l=1}^{L}C_l$.

Consequently, the OP of the SIMO-FAS system admits the following closed-form approximation
\begin{equation}\label{eq:gamma_op_closed_form}
P_{\mathrm{out}}\left(\gamma_{\mathrm{th}}\right)
\approx \frac{\gamma\!\left(N, \frac{\gamma_{\mathrm{th}}}{\theta}\right)}{\Gamma(N)},
\end{equation}
where $\gamma(\cdot,\cdot)$ denotes the lower incomplete Gamma function. Since $N$ is a positive integer, \eqref{eq:gamma_op_closed_form} simplifies to
\begin{equation}\label{eq:gamma_op_closed_form_int}
P_{\mathrm{out}}\left(\gamma_{\mathrm{th}}\right)
\approx 1 - e^{-\frac{\gamma_{\mathrm{th}}}{\theta}}
\sum_{n=0}^{N-1} \frac{1}{n!}\left(\frac{\gamma_{\mathrm{th}}}{\theta}\right)^{\!n}.
\end{equation}
\begin{proof}
See Appendix~C.
\end{proof}
\end{theorem}

\begin{remark}
The closed-form expression \eqref{eq:gamma_op_closed_form_int} provides a more refined approximation of the OP compared to the purely asymptotic result in \textbf{Corollary}~\ref{corollary:asymptotic_op}. While \eqref{eq:asymptotic_op} captures only the leading-order term $\left(\gamma_{\mathrm{th}}/\bar{\gamma}\right)^{N}$ and is accurate only in the high-SNR regime, the Gamma-matching approximation in \eqref{eq:gamma_op_closed_form_int} retains the full Gamma CDF structure, which includes higher-order terms that yield improved agreement at moderate SNRs.
\end{remark}

\begin{proposition}\label{proposition1:do}
The diversity order of the SIMO-FAS system is
\begin{equation}\label{eq:do}
G_d = -\lim_{\bar{\gamma}\to\infty}
\frac{\log P_{\mathrm{out}}\left(\gamma_{\mathrm{th}}\right)}
{\log\bar{\gamma}} = N.
\end{equation}
\begin{proof}

By substituting $\Psi \triangleq \frac{\prod_{l=1}^{L}N_l!}{N!}\prod_{l=1}^{L}C_l$ and $N={\sum_{l=1}^{L} N_l}$ into \eqref{eq:asymptotic_op}, the high-SNR asymptotic OP simplifies to
\begin{equation}
P_{\mathrm{out}}\left(\gamma_{\mathrm{th}}\right)
\approx \Psi\left(\frac{\gamma_{\mathrm{th}}}{\bar{\gamma}}\right)^{\!N}.
\end{equation}
Taking the logarithm and substituting into \eqref{eq:do} gives
\begin{equation}
G_d \approx -\lim_{\bar{\gamma}\to\infty}\frac{\log\Psi + N\log\gamma_{\mathrm{th}} - N\log\bar{\gamma}}{\log\bar{\gamma}} = N.
\end{equation}
\end{proof}
\end{proposition}

\begin{remark}\label{remark2}
The diversity order, which is shown to approximate the number of ports $N$, is independent of the number of blocks $L$. This result lends itself to a natural physical interpretation. Increasing the number of ports directly enlarges the diversity order, which in turn steepens the high-SNR slope of the OP curve, while the block partitioning parameters, including the number of blocks, the block sizes, and the intra-block correlation, affect only the coding gain without altering the diversity order.
\end{remark}

\begin{proposition}\label{proposition2:equal_blocks}
Consider the equal-size-block assumption where $N_l = N/L$ for all
$l\in\{1,\ldots,L\}$ (with $N$ divisible by $L$) and $\mu_l = \mu$ for all $l$.
Specializing \textbf{Corollary}~\ref{corollary:asymptotic_op} to this case, the asymptotic OP reduces to
\begin{equation}\label{eq:outage_equal_blocks}
P_{\mathrm{out}}^{\infty}\left(\gamma_{\mathrm{th}}\right) \approx
\frac{\left(\left(N/L\right)!\right)^{L}}{N!}\,\left(1-\mu^2\right)^{-(N-L)}
\left(\frac{\gamma_{\mathrm{th}}}{\bar{\gamma}}\right)^{N}.
\end{equation}
\begin{proof}
Substituting $N_l = N/L$, $\mu_l = \mu$ into \textbf{Corollary}~\ref{corollary:asymptotic_op} yields the result.
\end{proof}
\end{proposition}

\begin{remark}
The expression in \eqref{eq:outage_equal_blocks}, which provides a compact closed-form approximation for the equal-size-block configuration, will be employed as the theoretical benchmark in the numerical results.
\end{remark}

\begin{proposition}\label{proposition:exact_L_eq_N}
When $L=N$, each block contains a single port, i.e., $N_l=1$ for all
$l\in\{1,\ldots,N\}$, so that no intra-block port selection is involved
and $\gamma_l = \gamma_{l1}$. The exact OP in this case is given by
\begin{equation}\label{eq:exact_op_L_eq_N}
P_{\mathrm{out}}\left(\gamma_{\mathrm{th}}\right)
= 1 - e^{-\frac{\gamma_{\mathrm{th}}}{\bar{\gamma}}}
\sum_{n=0}^{N-1}\frac{1}{n!}
\left(\frac{\gamma_{\mathrm{th}}}{\bar{\gamma}}\right)^{n}.
\end{equation}
\begin{proof}
When $L=N$, the exact CDF in \textbf{Lemma}~\ref{lemma:exact_cdf_single} reduces to
\begin{equation}
F_{\gamma_l}(x) = \int_{0}^{\frac{x}{\bar{\gamma}}} e^{-t}\,dt
= 1 - e^{-\frac{x}{\bar{\gamma}}},
\end{equation}
since the product term in \eqref{eq:exact_cdf_single} is empty when $N_l=1$.
Hence each selected-port SNR $\gamma_l$ is exponentially distributed with mean $\bar{\gamma}$.
Since ports across different blocks are mutually independent under the
block-diagonal correlation model, $\{\gamma_l\}_{l=1}^{N}$ are i.i.d.
exponential random variables, and their sum
$\gamma_{\mathrm{MRC}} = \sum_{l=1}^{N}\gamma_l$
follows a Gamma distribution with shape parameter $N$ and scale parameter
$\bar{\gamma}$. The CDF of this distribution directly yields
\begin{equation}
F_{\gamma_{\mathrm{MRC}}}\left(\gamma_{\mathrm{th}}\right)
= 1 - e^{-\frac{\gamma_{\mathrm{th}}}{\bar{\gamma}}}
\sum_{n=0}^{N-1}\frac{1}{n!}
\left(\frac{\gamma_{\mathrm{th}}}{\bar{\gamma}}\right)^{n},
\end{equation}
which, together with \eqref{eq:outage_definition}, completes the proof.
\end{proof}
\end{proposition}

\begin{remark}
The expression in \eqref{eq:exact_op_L_eq_N} coincides with the exact OP
of a conventional $N$-branch MRC receiver over i.i.d. Rayleigh fading
channels, whose diversity order is well known to be $N$ \cite{simon2005digital}. This is consistent
with the physical interpretation of the $L=N$ case. When every block contains
exactly one port, the block-diagonal correlation model degenerates to a fully independent setting, and the SIMO-FAS reduces to a standard $N$-branch MRC system with no intra-block correlation. Moreover, in this special case, the Gamma-matching result in \textbf{Theorem}~\ref{theorem:gamma_matching_op} yields $\Psi = 1/N!$ and $\theta = \bar{\gamma}$, so that \eqref{eq:gamma_op_closed_form_int} reduces exactly to \eqref{eq:exact_op_L_eq_N}, confirming the consistency of the Gamma-matching with the exact result in the fully independent setting. Since spatial correlation can only degrade performance relative to the fully independent case, the diversity order of the general correlated SIMO-FAS is upper-bounded by the number of ports $N$. \textbf{Proposition}~\ref{proposition1:do} confirms that this upper bound is in fact achieved for any block configuration, thereby establishing $G_d = N$ as the diversity order of the considered system.
\end{remark}

\subsection{ER Analysis}

To complement the preceding OP analysis, we now investigate the ER of the SIMO-FAS system. 

\begin{theorem}\label{ER_exact}
The exact ER of the SIMO-FAS system under the block-diagonal correlation model can be expressed as
\begin{equation}\label{er}
\begin{aligned}
R_{N} &= \mathbb{E}\left\{\log_2\left(1 + \gamma_{\mathrm{MRC}}\right)\right\} \\
      &= -\int_{0}^{\infty}\log_2\left(1+x\right)\,
         d\left(1 - F_{\gamma_{\mathrm{MRC}}}\left(x\right)\right) \\
      &= \frac{1}{\ln 2}\int_{0}^{\infty}
         \frac{1 - F_{\gamma_{\mathrm{MRC}}}\left(x\right)}{1+x}\,dx.
\end{aligned}
\end{equation}
\end{theorem}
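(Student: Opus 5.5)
The plan is to write the ergodic rate as an expectation against the distribution of $\gamma_{\mathrm{MRC}}$ and then integrate by parts, moving the derivative from the CDF onto the logarithm. The first equality in \eqref{er} is just the definition. Since $\gamma_{\mathrm{MRC}}=\sum_{l=1}^{L}\gamma_l$ is a nonnegative random variable with CDF $F_{\gamma_{\mathrm{MRC}}}$, we have
\begin{equation}
\mathbb{E}\{g(\gamma_{\mathrm{MRC}})\}=\int_0^\infty g(x)\,dF_{\gamma_{\mathrm{MRC}}}(x)=-\int_0^\infty g(x)\,d\bigl(1-F_{\gamma_{\mathrm{MRC}}}(x)\bigr)
\end{equation}
with $g(x)=\log_2(1+x)$. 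This gives the second line of \eqref{er}. The Stieltjes form is used so that no explicit density is needed; by the discussion after Lemma~\ref{lemma:exact_cdf_single}, the density is cumbersome.

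Next, integrate by parts with $u=\log_2(1+x)$ and $dv=d(1-F_{\gamma_{\mathrm{MRC}}}(x))$. The boundary term is $-\bigl[\log_2(1+x)\,(1-F_{\gamma_{\mathrm{MRC}}}(x))\bigr]_0^\infty$.

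\emph{Lower limit.} This contribution vanishes because $\log_2 1=0$.

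\emph{Upper limit.} Here we must show $\log_2(1+x)\,\Pr\{\gamma_{\mathrm{MRC}}>x\}\to 0$ as $x\to\infty$, which is the main step needing justification. Each port SNR satisfies $\gamma_{ln}\sim\exp$ with mean $\bar{\gamma}$, and $\gamma_l\le\sum_n\gamma_{ln}$. Hence $\gamma_{\mathrm{MRC}}\le\sum_{l,n}\gamma_{ln}$, so $\mathbb{E}\{\gamma_{\mathrm{MRC}}\}\le N\bar{\gamma}<\infty$. Markov's inequality then gives $\Pr\{\gamma_{\mathrm{MRC}}>x\}\le N\bar{\gamma}/x$, so the product is $O(\log x/x)\to 0$.

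The remaining term is $\int_0^\infty (1-F_{\gamma_{\mathrm{MRC}}}(x))\,\frac{d}{dx}\log_2(1+x)\,dx$, and using $\frac{d}{dx}\log_2(1+x)=\frac{1}{(1+x)\ln 2}$ yields the third line of \eqref{er}.

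The resulting integral converges, since $\log_2(1+x)\le x/\ln 2$ implies $R_N\le \mathbb{E}\{\gamma_{\mathrm{MRC}}\}/\ln 2<\infty$; equivalently, the integrand is $O(x^{-2})$ by the Markov bound. Alternatively, the last identity can be obtained directly by writing $\log_2(1+\gamma)=\frac{1}{\ln 2}\int_0^\infty \frac{\mathbf{1}\{x<\gamma\}}{1+x}\,dx$ and applying Tonelli's theorem to exchange expectation and integral. This version sidesteps the boundary-term check, since the integrand is nonnegative. Finally, $F_{\gamma_{\mathrm{MRC}}}$ may be taken from Theorem~\ref{exop}, either the convolution form \eqref{eq:exact_mrc_convolution} or the Gil-Pelaez form \eqref{eq:exact_mrc_gil_pelaez}, which makes \eqref{er} an exact, computable expression.
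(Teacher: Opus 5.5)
Your proposal is correct and follows the route implicit in the paper's chain of equalities in \eqref{er}: write the expectation as a Stieltjes integral against $1-F_{\gamma_{\mathrm{MRC}}}$, integrate by parts, and use $\frac{d}{dx}\log_2(1+x)=\frac{1}{(1+x)\ln 2}$. The paper gives no separate proof and leaves the vanishing of the boundary term unstated, so your bound $\Pr\{\gamma_{\mathrm{MRC}}>x\}\le N\bar{\gamma}/x$ (or your Tonelli alternative, which avoids the boundary term altogether) supplies the one justification the paper omits.
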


\begin{remark}
Equation~\eqref{er} shows that the ER is fully determined by the CDF of the MRC output SNR $\gamma_{\mathrm{MRC}}$. Substituting the exact OP expression \eqref{eq:exact_mrc_gil_pelaez} into
\eqref{er} yields an exact integral representation of ER. However, since the CDF of the MRC output SNR $F_{\gamma_{\mathrm{MRC}}}$ is expressed as an integral involving a product of CFs, the resulting expression for ER does not admit a closed-form, thereby offering limited analytical insight.
\end{remark}

To circumvent this difficulty while still obtaining transparent and explicit results, we proceed
by substituting the Gamma-matching CDF from \textbf{Theorem}~\ref{theorem:gamma_matching_op}
into \eqref{er}, which yields a closed-form ER approximation.
\begin{theorem}\label{theorem:gamma_matching_er}
Based on the Gamma-matching in \textbf{Theorem}~\ref{theorem:gamma_matching_op}, the ER of
the SIMO-FAS system admits the following closed-form approximation
\begin{equation}\label{eq:gamma_matching_er}
R_{N}^{\infty} \approx \frac{e^{1/\theta}}{\ln 2}
\sum_{n=1}^{N} E_n\!\left(\frac{1}{\theta}\right),
\end{equation}
where $E_n(\cdot)$ denotes the generalized exponential integral of order $n$
and $\theta$ is given by \eqref{eq:gamma_params}.
\begin{proof}
See Appendix~D.
\end{proof}
\end{theorem}

\begin{remark}\label{remark5}
The Gamma-matching ER approximation in \eqref{eq:gamma_matching_er} captures
both the leading-order asymptotic behavior and the higher-order statistics of
the combined SNR. Nevertheless, since the Gamma-matching in
\textbf{Theorem}~\ref{theorem:gamma_matching_op} is obtained by matching the
CDF near the origin, it is well suited for the OP analysis where
$\gamma_{\mathrm{th}}$ is typically small, but the ER is an expectation over
the entire SNR range, which may lead to a larger approximation gap, especially
when $\bar{\gamma}$ is not sufficiently large.
\end{remark}

\begin{theorem}\label{theorem:ergodic_rate_tight}
The high-SNR asymptotic ER of the SIMO-FAS system with $L$ blocks is given by
\begin{equation}\label{eq:tight_ergodic_rate}
R_{N}^{\infty} \approx \log_2\!\left(1 + \bar{\gamma} L\left[\mu_l^2
+ \left(1-\mu_l^2\right)\sum_{n=1}^{N_l}\frac{1}{n}\right]\right).
\end{equation}
\begin{proof}
See Appendix~E.
\end{proof}
\end{theorem}

\begin{remark}\label{remark_er_close}
The approximation \eqref{eq:tight_ergodic_rate} reveals the capacity scaling laws of the SIMO-FAS system in a transparent manner. The effective mean SNR scales linearly with the number of blocks, reflecting the combining gain from MRC across independent blocks.
\end{remark}

\begin{remark} 
Within each block, the capacity contribution grows logarithmically through the harmonic series $\sum_{n=1}^{N_l}\frac{1}{n}$, but is discounted by the spatial independence factor $(1-\mu_l^2)$. Thus, stronger intra-block correlation suppresses the per-block diversity gain, which consequently diminishes the contribution of additional ports. These observations provide useful insights into the impact of block partitioning on the ER under different correlation conditions.
\end{remark}

\begin{proposition}\label{proposition:er_L_eq_N}
When $L = N$, the ER of the SIMO-FAS system admits the exact closed-form expression
\begin{equation}\label{eq:er_L_eq_N}
R_{N}\big|_{L=N} = \frac{e^{\frac{1}{\bar{\gamma}}}}{\ln 2}
\sum_{n=1}^{N} E_n\!\left(\frac{1}{\bar{\gamma}}\right),
\end{equation}
where $E_n(\cdot)$ denotes the generalized exponential integral of order $n$.
\begin{proof}
When $L = N$, \textbf{Proposition}~\ref{proposition:exact_L_eq_N} establishes that 
$\gamma_{\mathrm{MRC}}$ follows a Gamma distribution with shape $N$ and 
scale $\bar{\gamma}$, whose CDF is given by \eqref{eq:exact_op_L_eq_N}. 
Substituting into \eqref{er} yields
\begin{equation}
R_{N}\big|_{L=N} = \frac{1}{\ln 2}\int_{0}^{\infty}
\frac{1}{1+x}\,e^{-\frac{x}{\bar{\gamma}}}
\sum_{n=0}^{N-1}\frac{1}{n!}
\left(\frac{x}{\bar{\gamma}}\right)^{n}dx.
\end{equation}
Exchanging the order of summation and integration, the $n$-th term evaluates to
\begin{equation}
\frac{1}{n!\,\bar{\gamma}^n}\int_{0}^{\infty}
\frac{x^n\,e^{-\frac{x}{\bar{\gamma}}}}{1+x}\,dx
= e^{\frac{1}{\bar{\gamma}}}\,E_{n+1}\!\left(\frac{1}{\bar{\gamma}}\right),
\end{equation}
where the last equality follows from the standard integral representation of 
the generalized exponential integral \cite{integral}. Summing over 
$n = 0, 1, \ldots, N-1$ and re-indexing completes the proof.
\end{proof}
\end{proposition}

\begin{remark}\label{remark_er_ln}
The closed-form expression \eqref{eq:er_L_eq_N} coincides with the well-known exact ER of a conventional $N$-branch MRC receiver over i.i.d. Rayleigh fading channels \cite{Alouini1999}, which is consistent with the fact that the $L=N$ special case reduces to a fully independent $N$-branch MRC system as established in \textbf{Proposition}~\ref{proposition:exact_L_eq_N}. This result serves as both a sanity check for the general framework and a performance upper bound. Since spatial correlation degrades the ER relative to the independent case, the ER of the general correlated SIMO-FAS is bounded above by \eqref{eq:er_L_eq_N} for any block configuration with the same number of ports.
\end{remark}

\section{Numerical Results}\label{sec:simu}

Numerical investigations are conducted in this section to examine the accuracy of the developed analytical results and to explore the performance variations of the proposed SIMO-FAS under different system configurations. Unless otherwise specified, all results are obtained over Rayleigh fading channels under the block-diagonal correlation model with the equal-size-block assumption, and the intra-block correlation coefficients are taken to be identical, i.e., $\mu_l=\mu$ for all $l$. The Monte Carlo simulations in this paper use $1 \times 10^9$ channel realizations for OP and $1 \times 10^5$ channel realizations for ER.

\begin{figure}[t!]
\centering
\includegraphics[width=3.5in]{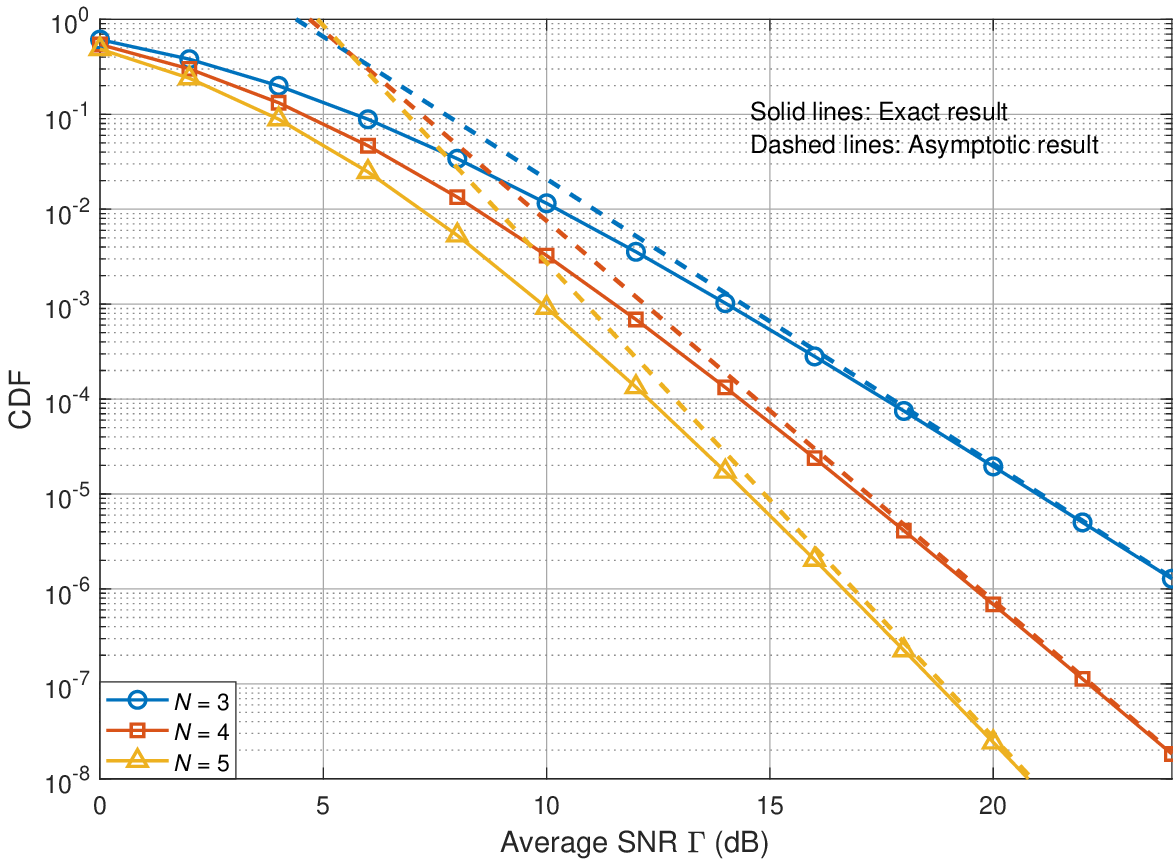}
\caption{The CDF of the selected-port SNR in the $l$-th block versus the average SNR with intra-block correlation coefficient $\mu=0.75$.}\label{cdf}
\vspace{-3mm}
\end{figure}

\subsection{Validation of the Asymptotic CDF of the Selected-Port SNR}
Fig.~\ref{cdf} shows the CDF of the selected-port SNR in the $l$-th block versus the average SNR for different numbers of ports, with intra-block correlation coefficient $\mu = 0.75$. The asymptotic results, plotted as dashed lines, agree closely with the exact analytical results in the high-SNR regime, validating Lemma~\ref{Lemma1:A CDF}. The small discrepancy in the low-SNR regime is expected, as the asymptotic expression retains only the dominant high-SNR term. As the number of ports increases, the CDF decays faster, since a larger number of ports raises the chance of selecting a strong port within each block. Fig.~\ref{cdf} therefore confirms both the accuracy of the asymptotic derivation and the diversity benefit of increasing the number of ports.

\begin{figure}[t!]
\centering
\includegraphics[width=3.5in]{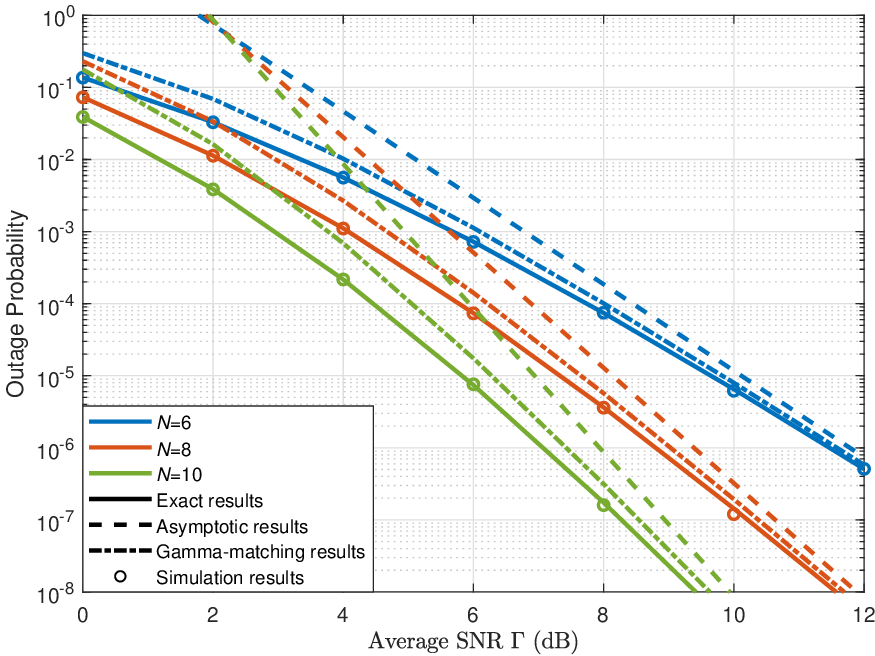}
\caption{The OP of SIMO-FAS versus the average SNR with the number of blocks $L=2$ with intra-block correlation coefficient $\mu=0.70$.}\label{SNR1}
\vspace{-3mm}
\end{figure}

\subsection{Impact of the Average SNR on OP}
Fig.~\ref{SNR1} shows the OP of the SIMO-FAS as a function of the average SNR for different numbers of ports, with $L=2$ blocks and intra-block correlation coefficient $\mu = 0.70$. The exact analytical results, shown as solid lines, match the Monte Carlo simulations depicted by circles across the entire SNR range, confirming the accuracy of the proposed framework. The Gamma-matching approximation, represented by dash-dot lines, is nearly indistinguishable from the exact curves over the moderate-to-high SNR range. By contrast, the asymptotic results plotted as dashed lines converge to the exact curves only as the average SNR grows large. This contrast reflects the fact that the asymptotic expression retains only the leading-order term, while the Gamma-matching approximation preserves the full Gamma CDF structure, which therefore remains accurate over a wider SNR range. The OP exhibits a monotonically decreasing trend with increasing average SNR, while a larger number of ports results in a more pronounced decline slope, which agrees with the higher diversity order derived in \textbf{Proposition}~\ref{proposition1:do}.

\begin{figure}[t!]
\centering
\includegraphics[width=3.5in]{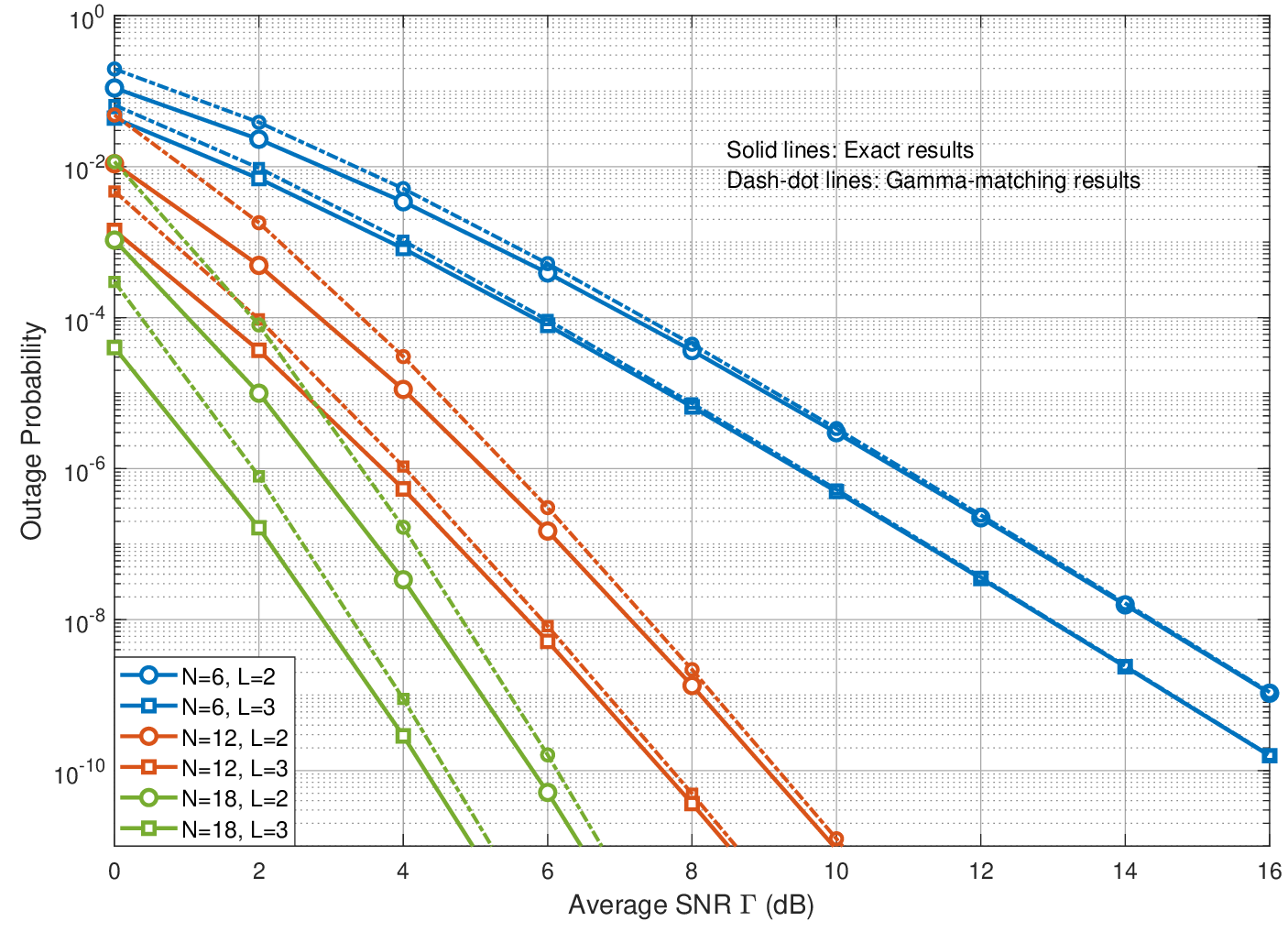}
\caption{OP of the SIMO-FAS versus the average SNR for different numbers of correlation blocks $L$ with intra-block correlation coefficient $\mu=0.60$.}\label{SNR2}
\vspace{-3mm}
\end{figure}

\subsection{Effect of the Number of Blocks on OP}
Fig.~\ref{SNR2} shows the OP of the SIMO-FAS versus the average SNR for different combinations of the number of ports and the number of blocks, with intra-block correlation coefficient $\mu = 0.60$. Both the exact analytical results and the Gamma-matching approximation agree closely with the simulations across the entire SNR range, confirming the accuracy of \textbf{Theorems}~\ref{exop} and~\ref{theorem:gamma_matching_op}. For a fixed number of ports, a larger number of blocks yields a lower OP, since partitioning the ports into more independent blocks creates more statistically independent branches for MRC combining. Notably, the curves for different numbers of blocks remain nearly parallel, which verifies \textbf{Remark}~\ref{remark2} that changing the number of blocks affects only the coding gain of the OP expression, while the diversity order is governed solely by the number of ports. Consequently, jointly increasing the number of ports and blocks yields a cumulative improvement. For instance, $N=18$ ports with $L=3$ blocks reduce the OP by several orders of magnitude relative to $N=6$ ports with $L=2$ blocks at the same SNR, reflecting the combined contribution of the two independent diversity mechanisms.

\begin{figure}[t!]
\centering
\includegraphics[width=3.5in]{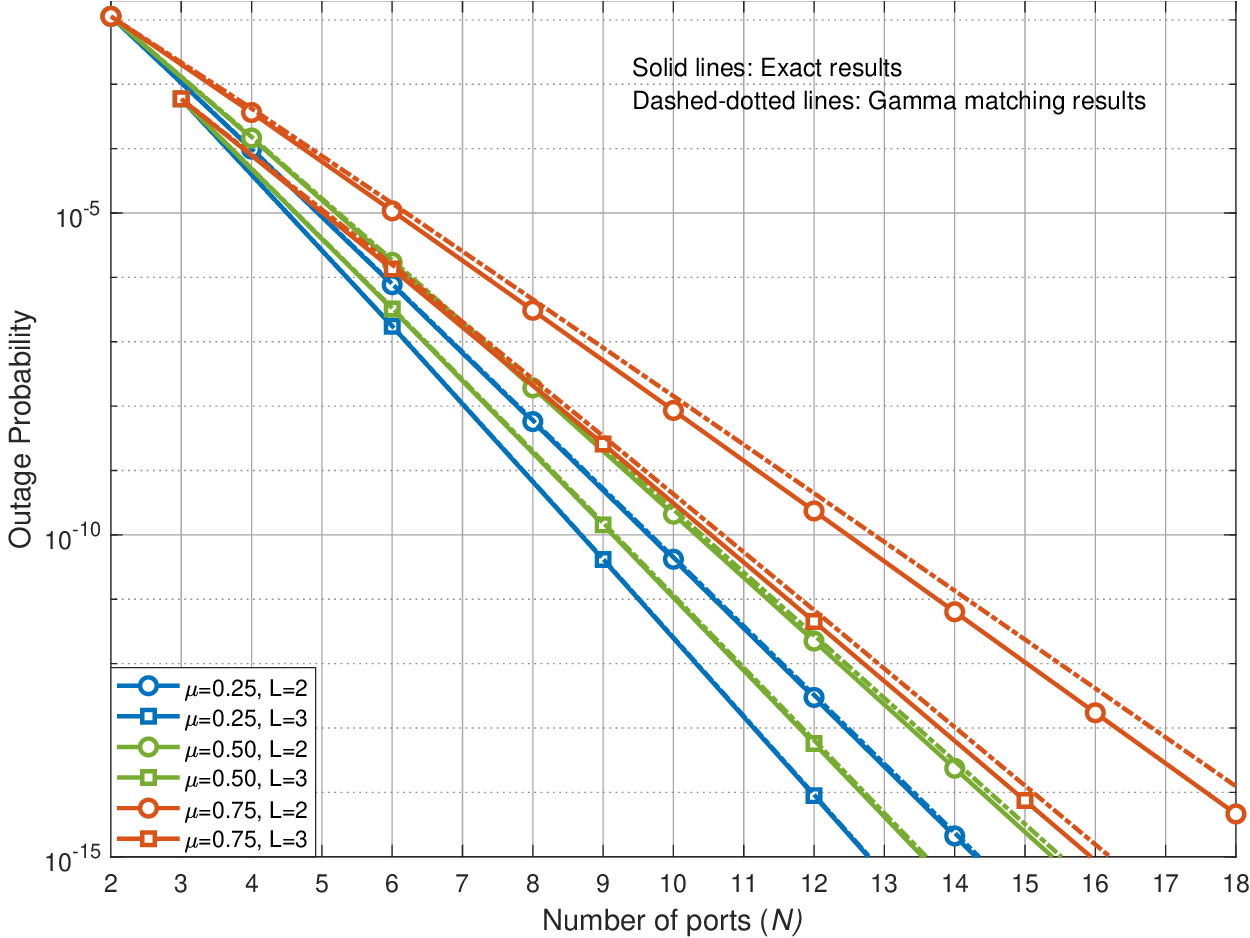}
\caption{OP of SIMO-FAS versus the number of ports under different intra-block correlation coefficients $\mu$ with $\bar{\gamma}=10$ dB.}\label{ns}
\vspace{-3mm}
\end{figure}

\subsection{Effect of the Number of Ports on OP}
The relationship between the OP and the number of ports is depicted in Fig.~\ref{ns}. 
The results are obtained under different intra-block correlation coefficients and correlation block numbers, while maintaining a constant average SNR of $\bar{\gamma}=10$ dB. The exact results show a close match with the Gamma-matching approximation across the entire range of the number of ports, validating the effectiveness of \textbf{Theorem}~\ref{theorem:gamma_matching_op}. As expected, the OP decreases continuously as the number of ports increases, since a larger port set provides richer spatial diversity. More importantly, stronger intra-block correlation degrades the OP for all configurations. The impact of the intra-block correlation coefficient is particularly pronounced at larger numbers of ports. For example, with $L=2$ blocks and $N=12$ ports, the OP under intra-block correlation coefficient $\mu=0.25$ is several orders of magnitude lower than under $\mu=0.75$. Meanwhile, for a fixed intra-block correlation coefficient, increasing the number of blocks from $2$ to $3$ yields a steeper decline of the OP with the number of ports, confirming that partitioning ports into more independent blocks enhances the system performance.

\begin{figure}[t!]
\centering
\includegraphics[width=3.5in]{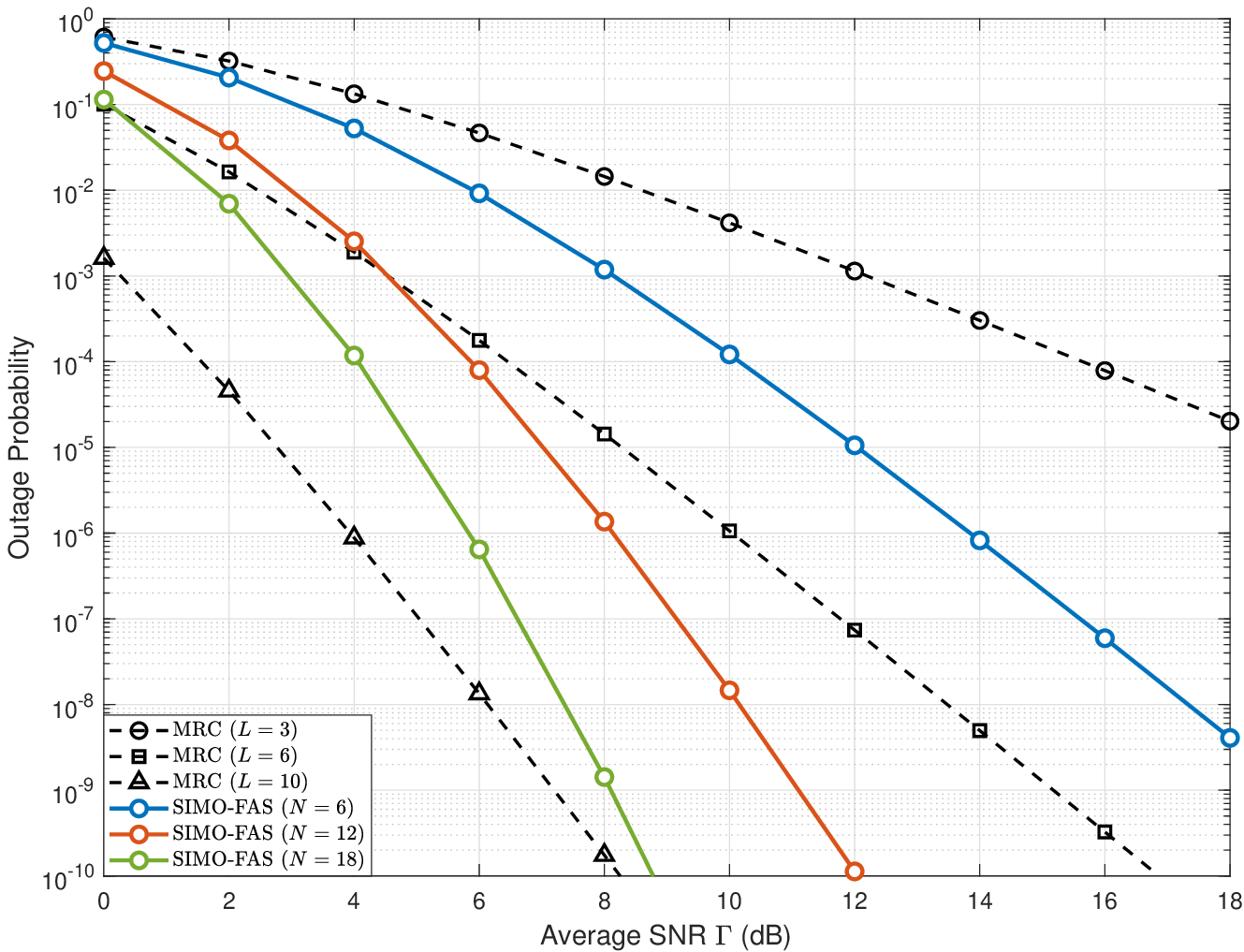}
\caption{The OP of SIMO-FAS system and conventional MRC receiver against the average SNR with intra-block correlation coefficient $\mu=0.75$.}\label{OPMRC}
\vspace{-3mm}
\end{figure}

\subsection{Comparison of the OP between SIMO-FAS and Conventional MRC Receiver}
In Fig.~\ref{OPMRC}, we compare the OP of the proposed SIMO-FAS receiver with that of the conventional MRC receiver as a function of the average SNR, with intra-block correlation coefficient $\mu=0.75$. The OP of the conventional MRC receiver corresponds to the special case $L = N$ established in \textbf{Proposition}~\ref{proposition:exact_L_eq_N}, since each block then contains a single port with no intra-block correlation. It can be observed that when the number of blocks is relatively small compared to the number of MRC branches, the conventional MRC receiver initially achieves a lower OP. However, as the number of ports increases, the per-block selection diversity strengthens and the two curves cross, with the SIMO-FAS eventually matching or outperforming conventional MRC despite having fewer combining branches. This crossover behavior highlights a key advantage of the FAS architecture. By deploying a larger number of ports within a compact aperture, the SIMO-FAS can compensate for the smaller number of combining branches, ultimately delivering competitive or superior reliability. Therefore, Fig.~\ref{OPMRC} demonstrates that SIMO-FAS can effectively trade port selection diversity for combining diversity, achieving comparable or better outage performance than conventional MRC with fewer radio-frequency chains.
 
\begin{figure}[t!]
\centering
\includegraphics[width=3.5in]{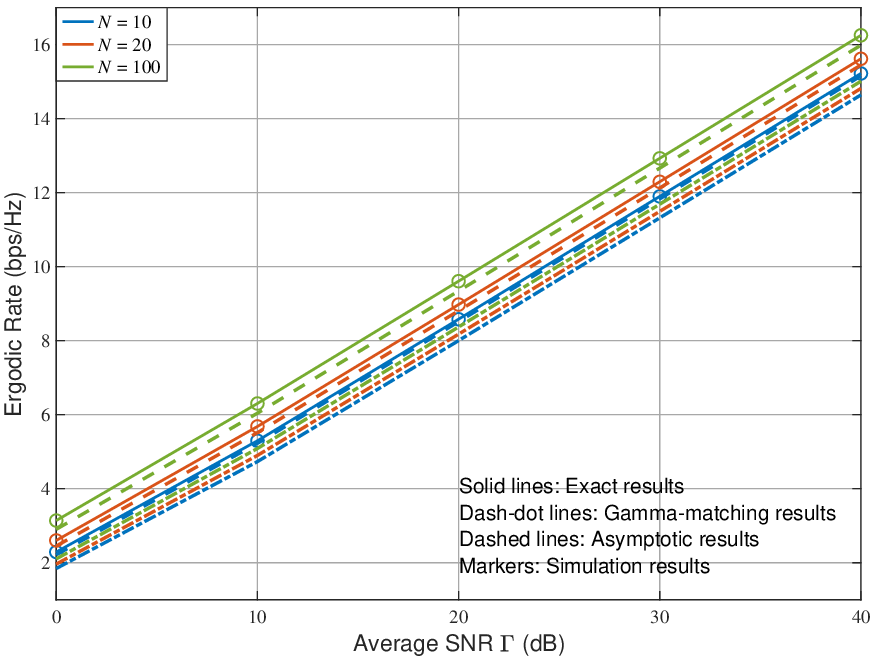}
\caption{ER of SIMO-FAS system versus the average SNR with intra-block correlation coefficient $\mu=0.60$.}\label{ergodic_rate}
\vspace{-3mm}
\end{figure}

\subsection{Effect of the Average SNR on ER}
In Fig.~\ref{ergodic_rate}, we present the ER versus the average SNR for the considered SIMO-FAS with intra-block correlation coefficient $\mu = 0.60$. It is observed that the Gamma-matching approximation represented by dash-dot lines exhibits a visible gap from the exact results across the entire SNR range. In contrast, the Jensen-based approximation depicted by dashed lines yields a much smaller gap, thereby providing a tighter estimate. This behavior is consistent with the discussion in \textbf{Remark}~\ref{remark5}. Since the Gamma-matching is obtained by matching the CDF near the origin, it is well suited for threshold-based metrics such as the OP, where the gap narrows as the average SNR increases. However, the ER is an expectation over the entire SNR distribution, so the local accuracy of the Gamma-matching near the origin does not translate into a uniformly tight ER approximation, and the gap persists even at high SNR. On the other hand, since the Jensen-based approximation directly captures the mean SNR, it offers a more reliable approximation for the ER. As expected, the ER increases monotonically with the average SNR because a higher received signal quality directly improves the achievable transmission rate. In addition, the configurations with more blocks and ports consistently achieve higher ER values, indicating that the gains brought by port selection and MRC are not limited to reliability improvement, but also translate into a meaningful enhancement in average spectral efficiency. Nevertheless, due to the intrinsic limitation of the Gamma-matching approximation discussed above, the following figures adopt the Jensen-based approximation as the primary analytical tool for rate characterization.

\begin{figure}[t!]
\centering
\includegraphics[width=3.5in]{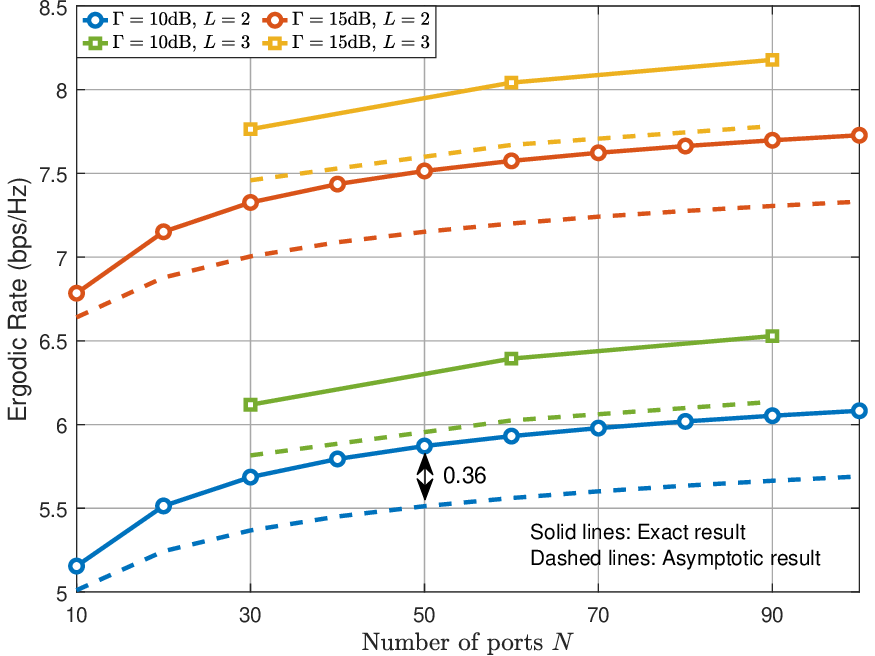}
\caption{ER of the proposed SIMO-FAS system versus the number of ports with intra-block correlation coefficient $\mu=0.75$.}\label{ergodic_rate_ports}
\vspace{-3mm}
\end{figure}

\subsection{Effect of the Number of Ports on ER}
The impact of the number of ports on the ER is illustrated in Fig.~\ref{ergodic_rate_ports}. 
The evaluation is conducted under an intra-block correlation coefficient of $\mu = 0.75$ and a constant average SNR of $\bar{\gamma}=15$ dB. The Jensen-based approximation exhibits a consistently small gap from the exact results across the whole range of the number of ports. For example, at $N = 6$ ports, the exact ER is approximately $3.0$ bps/Hz while the Jensen-based approximation yields around $3.05$ bps/Hz, corresponding to a relative gap of less than $2\%$. As the number of ports increases, the gap remains within this narrow margin, confirming that the Jensen-based approximation provides a reliable and accurate characterization of the ER. It is observed that the ER increases with the number of ports, since a larger number of ports enlarges the selection space in each block, which in turn improves the effective SNR after MRC. Furthermore, the rate improvement becomes increasingly noticeable when the number of ports increases, since the enhanced selection diversity can be more efficiently exploited to achieve spectral-efficiency gains. This observation is also in agreement with the outage performance illustrated in Fig.~\ref{SNR2} and Fig.~\ref{ns}.

\begin{figure}[t!]
\centering
\includegraphics[width=3.5in]{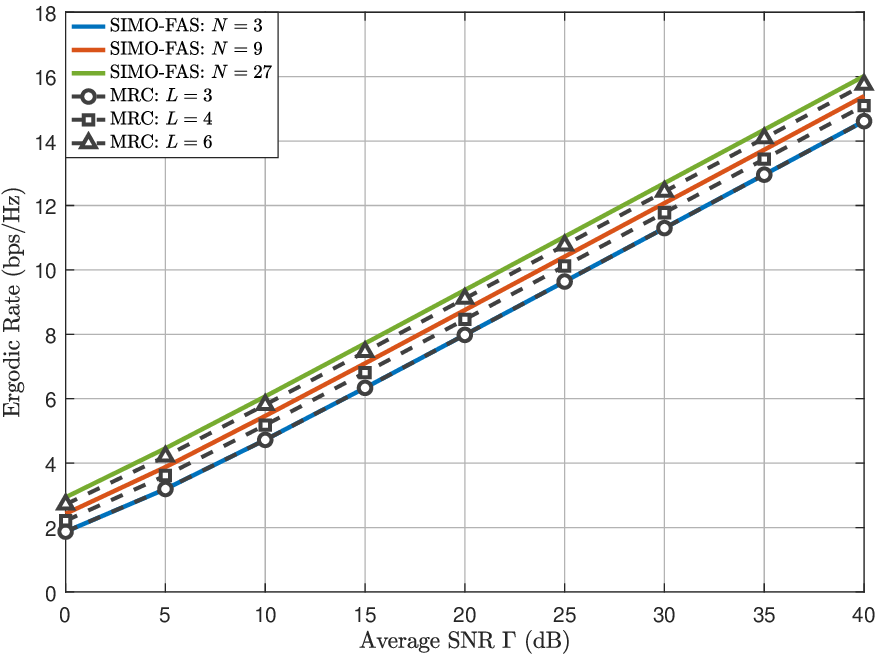}
\caption{The ER of SIMO-FAS system and conventional MRC receiver against the average SNR with intra-block correlation coefficient $\mu=0.75$.}\label{ermrc}
\vspace{-3mm}
\end{figure}

\subsection{Comparison of the ER between SIMO-FAS and Conventional MRC Receiver}
Fig.~\ref{ermrc} compares the ER of the proposed SIMO-FAS system with that of the conventional MRC receiver as a function of the average SNR, with intra-block correlation coefficient $\mu = 0.75$. The ER of the conventional MRC receiver is obtained from the exact closed-form expression in \textbf{Proposition}~\ref{proposition:er_L_eq_N} by setting $L = N$, which corresponds to the well-known ER of $N$-branch MRC over i.i.d. Rayleigh fading channels. It can be observed that the ER of both schemes increases monotonically with the average SNR, and that a larger number of ports consistently yields a higher ER, confirming that deploying more ports is beneficial for improving the average spectral efficiency. The analytical curves based on the Jensen-based approximation follow the exact results and simulation markers closely, validating the accuracy of the approximation. However, for the same number of combining branches, the proposed SIMO-FAS achieves a strictly higher ER than the conventional MRC receiver across the entire SNR range considered. This performance advantage originates from the same mechanism identified in the OP analysis: prior to MRC combining, the strongest port is selected within each block, so the branches entering the combiner carry statistically higher SNRs than those in the conventional MRC receiver, resulting in an improved combined output SNR and hence a higher ER. Furthermore, the ER gap between the two schemes remains approximately constant in the high-SNR regime, which is consistent with the fact that both systems achieve the same multiplexing gain of one, so the performance difference manifests as a fixed coding gain offset rather than a difference in slope. Therefore, Fig.~\ref{ermrc} verifies that the proposed SIMO-FAS not only inherits the combining gain of conventional MRC, but also achieves a consistent ER improvement through port selection.

\section{Conclusion}\label{sec:conclusion}
This paper has developed an analytical framework for characterizing the OP and ER of the SIMO-FAS with multiple activated ports under the block-diagonal correlation model. Exact OP expressions were derived in both convolution and characteristic-function forms. A Gamma-matching approximation was further developed to provide a refined closed-form OP estimate across the entire SNR range. High-SNR asymptotic analysis established that the diversity order equals the number of ports. For the ER, a Gamma-matching approximation together with a complementary Jensen-based approximation was derived, with the latter shown to provide tight estimates. Exact closed-form ER expressions were further obtained for the special case where the number of ports equals the number of blocks. Numerical evaluations demonstrate that the proposed SIMO-FAS system can attain performance comparable to that of conventional MRC receivers while requiring fewer combining branches. Moreover, the performance gain becomes increasingly significant with a growing number of ports, demonstrating the effectiveness of exploiting port selection diversity. These results establish SIMO-FAS as a cost-efficient alternative to conventional MRC for next-generation multiple access. Several promising directions remain open for further investigation, including the extension of the current framework to non-stationary channel models, the optimization of port allocation strategies, and the exploration of multi-user FAS scenarios.

\numberwithin{equation}{section}

\numberwithin{equation}{section}
\section*{Appendix~A: Proof of \textbf{Lemma}~\ref{Lemma1:A CDF}}\label{Appendix:Bs}
\renewcommand{\theequation}{A.\arabic{equation}}
\setcounter{equation}{0}

In the high-SNR regime, let $U \triangleq \gamma/\bar{\gamma}$. As $\bar{\gamma} \to \infty$, we have $U \to 0$, which implies $t \in [0,U]$ and $t \to 0$ as well. Hence $e^{-t} \approx 1$.

For the first-order Marcum Q-function $Q_1(a,b)$, the Bessel-series representation yields the dominant small-argument behavior ($a \to 0, b \to 0$) \cite{simon2005digital}
\begin{equation}\label{app}
    1 - Q_1(a,b) \approx \frac{b^2}{2}.
\end{equation}

Substituting $b = \beta_l \sqrt{U}$ into \eqref{app}, we obtain
\begin{equation}\label{A2}
    1 - Q_1\left( \alpha_l \sqrt{t}, \beta_l \sqrt{U} \right)
    \approx \frac{\beta_l^2 U}{2}
    = \frac{U}{1 - \mu_l^2}.
\end{equation}
The expression in \eqref{A2} is independent of the integration variable $t$. Therefore, the product term becomes a constant and
\begin{equation}\label{eq:asym_cdf_single}
\begin{aligned}
    F_{\gamma_l}^{\infty}(\gamma)
    &\approx \int_{0}^{U} \prod_{n=2}^{N_l} \left( \frac{U}{1 - \mu_l^2} \right) dt\\
    &= \left( \prod_{n=2}^{N_l} \frac{1}{1 - \mu_l^2} \right) U^{N_l}\\
    &= C_l \left( \frac{\gamma}{\bar{\gamma}} \right)^{N_l}.\\
\end{aligned}
\end{equation}
Thus, the proof is complete.

\numberwithin{equation}{section}
\section*{Appendix~B: Proof of \textbf{Lemma}~4}\label{Appendix:Cs}
\renewcommand{\theequation}{B.\arabic{equation}}
\setcounter{equation}{0}

Finding the exact distribution of $\gamma_{\mathrm{MRC}}$ requires an $L$-fold convolution of the PDFs, which is mathematically intractable in its exact form. However, in the asymptotic high-SNR regime, we can invoke the asymptotic convolution theorem for multi-block diversity systems \cite{wang2003simple}. If a set of independent non-negative random variables $\{X_l\}_{l=1}^L$ possesses PDFs that behave asymptotically near the origin as $f_{X_l}(x) \approx a_l x^{t_l}$ (with $a_l > 0$ and $t_l > -1$), the CDF of their sum $Y = \sum_{l=1}^L X_l$ at a threshold $y_{\mathrm{th}}$ is asymptotically given by:
\begin{equation}\label{eq:convolution_theorem_general}
    F_Y^{\infty}(y_{\mathrm{th}}) \approx \frac{\prod_{l=1}^{L} \left( a_l  t_l! \right)}{\left( \sum_{l=1}^{L} (t_l + 1) \right)!} y_{\mathrm{th}}^{\sum_{l=1}^{L} (t_l + 1)}.
\end{equation}

We now map the parameters from the PDF of the selected-port SNR \eqref{asym_pdf} into \eqref{eq:convolution_theorem_general}. For the $l$-th block, we identify the coefficient and the exponent as:
\begin{equation}
\begin{aligned}
    a_l &= \frac{C_l N_l}{\bar{\gamma}^{N_l}}, \\
    t_l &= N_l - 1.
\end{aligned}
\end{equation}

Based on this parameter mapping, we evaluate the component terms required for the general theorem. First, the term inside the product in the numerator becomes:
\begin{equation}\label{eq:numerator_eval}
    a_l  t_l! = \left( \frac{C_l  N_l}{\bar{\gamma}^{N_l}} \right) (N_l - 1)! = \frac{C_l  N_l!}{\bar{\gamma}^{N_l}}.
\end{equation}
Second, the summation term in the exponent and the denominator simplifies to
\begin{equation}\label{eq:denominator_eval}
    \sum_{l=1}^{L} (t_l + 1) = \sum_{l=1}^{L} \left( (N_l - 1) + 1 \right) = \sum_{l=1}^{L} N_l.
\end{equation}

Substituting the evaluated components \eqref{eq:numerator_eval} and \eqref{eq:denominator_eval} into the asymptotic convolution formula \eqref{eq:convolution_theorem_general}, we obtain the asymptotic OP of the MRC system evaluated at the threshold $\gamma_{\mathrm{th}}$:
\begin{equation}
\begin{aligned}
    P_{\mathrm{out, MRC}}^{\infty}(\gamma_{\mathrm{th}}) &\approx \frac{\prod_{l=1}^{L} \left( \frac{C_l  N_l!}{\bar{\gamma}^{N_l}} \right)}{\left( \sum_{l=1}^{L} N_l \right)!} \gamma_{\mathrm{th}}^{\sum_{l=1}^{L} N_l} \\
    &= \frac{\prod_{l=1}^{L} N_l!}{\left( \sum_{l=1}^{L} N_l \right)!} \left( \prod_{l=1}^{L} C_l \right) \left( \frac{\gamma_{\mathrm{th}}}{\bar{\gamma}} \right)^{\sum_{l=1}^{L} N_l}.
\end{aligned}
\end{equation}

\section*{Appendix~C: Proof of \textbf{Theorem}~\ref{theorem:gamma_matching_op}}\label{Appendix:gamma}
\renewcommand{\theequation}{C.\arabic{equation}}
\setcounter{equation}{0}

We prove the Gamma-matching result by equating the small-argument asymptotic expansion of the CDF of $\gamma_{\mathrm{MRC}}$ to that of a Gamma-distributed random variable.

From \textbf{Lemma}~\ref{system_cdf}, the asymptotic CDF of $\gamma_{\mathrm{MRC}}$ near the origin is given by
\begin{equation}\label{eq:gamma_matching_asym_cdf}
F_{\gamma_{\mathrm{MRC}}}(x)
\approx \frac{\prod_{l=1}^{L} N_l!}{N!}
\left(\prod_{l=1}^{L} C_l\right)
\left(\frac{x}{\bar{\gamma}}\right)^{N},
\end{equation}
where $N = \sum_{l=1}^{L} N_l$.

Now consider a Gamma-distributed random variable $\tilde{\gamma}\sim\mathrm{Gamma}(k,\theta)$ with shape parameter $k>0$ and scale parameter $\theta>0$, whose PDF and CDF are respectively
\begin{equation}
f_{\tilde{\gamma}}(x) = \frac{x^{k-1}e^{-x/\theta}}{\theta^{k}\,\Gamma(k)},\qquad
F_{\tilde{\gamma}}(x) = \frac{\gamma(k, x/\theta)}{\Gamma(k)},\quad x\ge 0,
\end{equation}
where $\gamma(\cdot,\cdot)$ denotes the lower incomplete Gamma function.

The lower incomplete Gamma function admits the series expansion \cite{abramowitz2006handbook}
\begin{equation}\label{eq:lower_inc_gamma_series}
\gamma(k, z) = \sum_{n=0}^{\infty} \frac{(-1)^n z^{k+n}}{n!\,(k+n)},
\qquad |z|<\infty.
\end{equation}
For small $z$, the leading-order behavior is
\begin{equation}\label{eq:lower_inc_gamma_leading}
\gamma(k, z) = \frac{z^{k}}{k} + O\!\left(z^{k+1}\right),\qquad z\to 0.
\end{equation}
Therefore, as $x\to 0$,
\begin{equation}\label{eq:gamma_cdf_asymptotic}
F_{\tilde{\gamma}}(x)
= \frac{\gamma(k, x/\theta)}{\Gamma(k)}
\approx \frac{(x/\theta)^{k}}{k\,\Gamma(k)}
= \frac{x^{k}}{\theta^{k}\,\Gamma(k+1)}.
\end{equation}

For the distribution of $\tilde{\gamma}$ to asymptotically match that of $\gamma_{\mathrm{MRC}}$ in the high-SNR regime, we require that the leading-order terms of their CDFs coincide. Equating the exponents of $x$ in \eqref{eq:gamma_matching_asym_cdf} and \eqref{eq:gamma_cdf_asymptotic} yields
\begin{equation}\label{eq:gamma_match_shape}
k = N.
\end{equation}
Equating the coefficients gives
\begin{equation}
\frac{1}{\theta^{N}\,\Gamma(N+1)}
= \frac{\prod_{l=1}^{L} N_l!}{N!}
\left(\prod_{l=1}^{L} C_l\right)
\frac{1}{\bar{\gamma}^{N}}.
\end{equation}
Since $\Gamma(N+1)=N!$, this simplifies to
\begin{equation}
\frac{1}{\theta^{N}}
= \left(\prod_{l=1}^{L} N_l!\right)
\left(\prod_{l=1}^{L} C_l\right)
\frac{1}{\bar{\gamma}^{N}}.
\end{equation}
Solving for $\theta$, we obtain
\begin{equation}
\theta^{N} = \frac{\bar{\gamma}^{N}}
{\prod_{l=1}^{L} N_l! \prod_{l=1}^{L} C_l},
\qquad
\theta = \frac{\bar{\gamma}}
{\left(\prod_{l=1}^{L} N_l! \prod_{l=1}^{L} C_l\right)^{1/N}}.
\end{equation}

Defining $\Psi \triangleq \dfrac{\prod_{l=1}^{L}N_l!}{N!}\prod_{l=1}^{L}C_l$, we have $N!\Psi = \prod_{l=1}^{L} N_l! \prod_{l=1}^{L} C_l$, and $\theta$ can be compactly written as
\begin{equation}
\theta = \frac{\bar{\gamma}}{\left(N!\,\Psi\right)^{1/N}}.
\end{equation}

Thus $\tilde{\gamma}\sim\mathrm{Gamma}(N,\theta)$ asymptotically matches $\gamma_{\mathrm{MRC}}$ in the high-SNR regime, which completes the proof of \textbf{Theorem}~\ref{theorem:gamma_matching_op}. The closed-form OP expressions in \textbf{Theorem}~\ref{theorem:gamma_matching_op} follow directly from the CDF of the Gamma distribution and the identity
\begin{equation}\label{eq:gamma_int_cdf}
\frac{\gamma(N, x)}{\Gamma(N)}
= 1 - e^{-x}\sum_{n=0}^{N-1}\frac{x^{n}}{n!},
\qquad N\in\mathbb{Z}^{+},
\end{equation}
by substituting $x = \gamma_{\mathrm{th}}/\theta$.

\numberwithin{equation}{section}
\section*{Appendix~D: Proof of \textbf{Theorem}~\ref{theorem:gamma_matching_er}}\label{Appendix:gamma_er}
\renewcommand{\theequation}{D.\arabic{equation}}
\setcounter{equation}{0}

From \textbf{Theorem}~\ref{theorem:gamma_matching_op}, the CDF of $\gamma_{\mathrm{MRC}}$ is approximated by that of a Gamma$(N,\theta)$ random variable:
\begin{equation}\label{eq:gamma_er_cdf}
F_{\gamma_{\mathrm{MRC}}}(x) \approx 1 - e^{-x/\theta}
\sum_{m=0}^{N-1} \frac{1}{m!}\left(\frac{x}{\theta}\right)^{\!m},
\end{equation}
where $\theta$ is given by \eqref{eq:gamma_params}. Substituting into the ER definition \eqref{er} yields
\begin{equation}\label{eq:gamma_er_int}
R_{N}^{\infty} = \frac{1}{\ln 2}\int_{0}^{\infty}
\frac{1}{1+x}\,e^{-x/\theta}
\sum_{m=0}^{N-1}\frac{1}{m!}
\left(\frac{x}{\theta}\right)^{\!m}dx.
\end{equation}
Exchanging the order of summation and integration, we obtain
\begin{equation}\label{eq:gamma_er_sum}
R_{N}^{\infty} = \frac{1}{\ln 2}\sum_{m=0}^{N-1}
\frac{1}{m!\,\theta^{m}}
\int_{0}^{\infty}\frac{x^{m}\,e^{-x/\theta}}{1+x}\,dx.
\end{equation}
The integral admits the closed-form evaluation \cite{integral}
\begin{equation}\label{eq:gamma_er_int_eval}
\frac{1}{m!\,\theta^{m}}
\int_{0}^{\infty}\frac{x^{m}\,e^{-x/\theta}}{1+x}\,dx
= e^{1/\theta}\,E_{m+1}\!\left(\frac{1}{\theta}\right),
\end{equation}
where $E_{n}(\cdot)$ denotes the generalized exponential integral. Substituting \eqref{eq:gamma_er_int_eval} into \eqref{eq:gamma_er_sum} and re-indexing $n=m+1$ gives
\begin{equation}
R_{N}^{\infty} = \frac{e^{1/\theta}}{\ln 2}
\sum_{m=0}^{N-1} E_{m+1}\!\left(\frac{1}{\theta}\right)
= \frac{e^{1/\theta}}{\ln 2}
\sum_{n=1}^{N} E_{n}\!\left(\frac{1}{\theta}\right),
\end{equation}
which completes the proof of \textbf{Theorem}~\ref{theorem:gamma_matching_er}.

\numberwithin{equation}{section}
\section*{Appendix~E: Proof of \textbf{Theorem}~\ref{theorem:ergodic_rate_tight}}\label{Appendix:Gs}
\renewcommand{\theequation}{E.\arabic{equation}}
\setcounter{equation}{0}

Since $f(x)=\log_2(1+x)$ is concave for $x\ge 0$, Jensen's inequality gives
\begin{equation}\label{eq:jensen_app_F}
    R_N \le \log_2 \left( 1 + \mathbb{E}\{\gamma_{\mathrm{MRC}}\} \right).
\end{equation}

Using the linearity of expectation and the equal-size-block assumption, we have
\begin{equation}\label{eq:exp_mrc}
    \mathbb{E}\{\gamma_{\mathrm{MRC}}\} = \mathbb{E} \left\{ \sum_{l=1}^{L} \gamma_l \right\} = L\mathbb{E}\{\gamma_l\}.
\end{equation}
Based on \eqref{max},  $\gamma_l = \mathop {\max }\limits_{1 \le k \le {N_l}}  \gamma_{l,k}$ denotes the selected SNR within the $l$-th block.

Approximating the correlated selection gain by a convex combination of the fully correlated and fully independent cases yields
\begin{equation}\label{eq:snr_decomp}
    \mathbb{E} \left\{ \max_{1 \le k \le N_l} \gamma_{l,k} \right\} \approx \mu_l^2 \, \mathbb{E}\{\gamma_c\} + (1 - \mu_l^2) \, \mathbb{E} \left\{ \max_{1 \le k \le N_l} \gamma_{\mathrm{ind},k} \right\},
\end{equation}
where $\gamma_c$ denotes the fully correlated component, and $\gamma_{\mathrm{ind},k}$ denotes the fully independent component. Here, $\mathbb{E}\{\gamma_c\}=\bar{\gamma}$, and by the order-statistics result for the maximum of $N_l$ i.i.d. exponential random variables \cite[Eq. (9.32)]{simon2005digital},
\begin{equation}\label{eq:harmonic_sum}
    \mathbb{E} \left\{ \max_{1 \le k \le N_l} \gamma_{\mathrm{ind},k} \right\} = \bar{\gamma} \sum_{n=1}^{N_l} \frac{1}{n}.
\end{equation}

Substituting these into \eqref{eq:snr_decomp}, we obtain
\begin{equation}\label{eq:exp_gamma_l}
    \mathbb{E}\{\gamma_l\} \approx \bar{\gamma} \left[ \mu_l^2 + (1 - \mu_l^2) \sum_{n=1}^{N_l} \frac{1}{n} \right].
\end{equation}

Finally, combining \eqref{eq:jensen_app_F}, \eqref{eq:exp_mrc}, and \eqref{eq:exp_gamma_l} gives
\begin{equation}
    R_{N}^{\infty} \approx \log_2 \left( 1 + \bar{\gamma} L \left[ \mu_l^2 + (1 - \mu_l^2) \sum_{n=1}^{N_l} \frac{1}{n} \right] \right).
\end{equation}
This completes the proof.

\bibliographystyle{IEEEtran}

\begin{thebibliography}{}
\providecommand{\url}[1]{#1}
\csname url@samestyle\endcsname
\providecommand{\newblock}{\relax}
\providecommand{\bibinfo}[2]{#2}
\providecommand{\BIBentrySTDinterwordspacing}{\spaceskip=0pt\relax}
\providecommand{\BIBentryALTinterwordstretchfactor}{4}
\providecommand{\BIBentryALTinterwordspacing}{\spaceskip=\fontdimen2\font plus
\BIBentryALTinterwordstretchfactor\fontdimen3\font minus
  \fontdimen4\font\relax}
\providecommand{\BIBforeignlanguage}[2]{{%
\expandafter\ifx\csname l@#1\endcsname\relax
\typeout{** WARNING: IEEEtran.bst: No hyphenation pattern has been}%
\typeout{** loaded for the language `#1'. Using the pattern for}%
\typeout{** the default language instead.}%
\else
\language=\csname l@#1\endcsname
\fi
#2}}
\providecommand{\BIBdecl}{\relax}
\BIBdecl

\end{thebibliography}


\begin{thebibliography}{10}
\providecommand{\url}[1]{#1}
\csname url@samestyle\endcsname
\providecommand{\newblock}{\relax}
\providecommand{\bibinfo}[2]{#2}
\providecommand{\BIBentrySTDinterwordspacing}{\spaceskip=0pt\relax}
\providecommand{\BIBentryALTinterwordstretchfactor}{4}
\providecommand{\BIBentryALTinterwordspacing}{\spaceskip=\fontdimen2\font plus
\BIBentryALTinterwordstretchfactor\fontdimen3\font minus
  \fontdimen4\font\relax}
\providecommand{\BIBforeignlanguage}[2]{{%
\expandafter\ifx\csname l@#1\endcsname\relax
\typeout{** WARNING: IEEEtran.bst: No hyphenation pattern has been}%
\typeout{** loaded for the language `#1'. Using the pattern for}%
\typeout{** the default language instead.}%
\else
\language=\csname l@#1\endcsname
\fi
#2}}
\providecommand{\BIBdecl}{\relax}
\BIBdecl


\bibitem{6G1}
C.-X.~Wang {\em et al.}, ``On the road to {6G}: Visions, requirements, key technologies and testbeds,'' \emph{IEEE Commun. Surveys Tuts.}, vol.~25, no.~2, pp. 905--974, 2nd Quart., 2023.

\bibitem{6G2}
W.~Jiang, B.~Han, M.~A.~Habibi, and H.~D.~Schotten, ``The road towards {6G}: A comprehensive survey,'' \emph{IEEE Open J. Commun. Soc.}, vol.~2, pp. 334--366, 2021.

\bibitem{6G3}
M.~Z.~Chowdhury, M.~Shahjalal, S.~Ahmed, and Y.~M.~Jang, ``{6G} wireless communication systems: Applications, requirements, technologies, challenges, and research directions,'' \emph{IEEE Open J. Commun. Soc.}, vol.~1, pp. 957--975, 2020.

\bibitem{6G4}
Y.~Xiao, Q.~Du, and G.~K.~Karagiannidis, ``Statistical age of information: A risk-aware metric and its applications in status updates,'' \emph{IEEE Trans. Wireless Commun.}, vol.~24, no.~3, pp. 2325--2340, Mar.~2025.

\bibitem{fas1}
K.-K.~Wong, K.-F.~Tong, Y.~Shen, Y.~Chen, and Y.~Zhang, ``{Bruce Lee}-inspired fluid antenna system: Six research topics and the potentials for {6G},'' \emph{Frontiers Commun. Netw.}, vol.~3, Mar.~2022, Art.~no.~853416.

\bibitem{fas2}
A.~Shojaeifard {\em et al.}, ``{MIMO} evolution beyond {5G} through reconfigurable intelligent surfaces and fluid antenna systems,'' \emph{Proc. IEEE}, vol.~110, no.~9, pp. 1244--1265, Sep.~2022.

\bibitem{fas3}
J.~Zheng {\em et al.}, ``Flexible-position {MIMO} for wireless communications: Fundamentals, challenges, and future directions,'' \emph{IEEE Wireless Commun.}, vol.~31, no.~5, pp. 18--26, Oct.~2024.



\bibitem{fas4}
K. K. Wong, K. F. Tong, Y. Zhang, and Z. Zheng, ``Fluid antenna system for 6G: When Bruce Lee inspires wireless communications,'' {\em IET Elect. Lett.}, vol. 56, no. 24, pp. 1288--1290, Nov. 2020.

\bibitem{fas}
K.-K. Wong, A.~Shojaeifard, K.-F. Tong, and Y.~Zhang, ``Fluid antenna systems,'' \emph{IEEE Trans. Wireless Commun.}, vol.~20, no.~3, pp. 1950--1962, Mar. 2021.

\bibitem{new2025fluid}
W.~K.~New \emph{et al.},
``Fluid antenna systems: Redefining reconfigurable wireless communications,''
\emph{IEEE J. Sel. Areas Commun.},
vol.~44, pp.~1013--1044, Nov.~2025.

\bibitem{RIS}
F.~R.~Ghadi, K.-K.~Wong, W.~K.~New, H.~Xu, R.~Murch, and Y.~Zhang, ``On performance of {RIS}-aided fluid antenna systems,'' \emph{IEEE Wireless Commun. Lett.}, vol.~13, no.~8, pp. 2175--2179, Aug.~2024.

\bibitem{RIS1}
J.~Yao, X.~Lai, K.~Zhi, T.~Wu, M.~Jin, C.~Pan, M.~Elkashlan, C.~Yuen, and K.-K.~Wong, ``A framework of {FAS-RIS} systems: Performance analysis and throughput optimization,'' \emph{IEEE Trans. Wireless Commun.}, vol.~25, pp. 1333--1348, 2026.

\bibitem{RIS2}
X.~Zhu, K.-K.~Wong, B.~Tang, W.~Chen, and C.-B.~Chae, ``Fluid reconfigurable intelligent surface ({FRIS}) enabling secure wireless communications,'' \emph{IEEE Wireless Commun. Lett.}, vol.~15, pp. 2408--2412, 2026.

\bibitem{NOMA1}
L.~Tlebaldiyeva, S.~Arzykulov, T.~A.~Tsiftsis, and G.~Nauryzbayev, ``Full-duplex cooperative {NOMA}-based {mmWave} networks with fluid antenna system ({FAS}) receivers,'' in \emph{Proc. Int. Balkan Conf. Commun. Netw. (BalkanCom)}, Istanbul, Turkey, Jun.~2023, pp.~1--6.

\bibitem{NOMA2}
W.~K.~New, K.-K.~Wong, H.~Xu, K.-F.~Tong, C.-B.~Chae, and Y.~Zhang, ``Fluid antenna system enhancing orthogonal and non-orthogonal multiple access,'' \emph{IEEE Commun. Lett.}, vol.~28, no.~1, pp. 218--222, Jan.~2024.

\bibitem{NOMA3}
H.~Pang, F.~Ji, M.~Wen, and Z.~Ding, ``Power minimization for fluid antenna-assisted downlink non-orthogonal multiple access systems,'' \emph{IEEE J. Sel. Topics Signal Process.}, 2026, early access, doi: 10.1109/JSTSP.2026.3678746.


\bibitem{NOMA4}
J.~Xie, Q.~Cui, Y.~Liu, Y.~Chen, X.~Yue, and X.~Tao, ``Exploiting fluid antenna system in {NOMA} satellite communication networks,'' \emph{IEEE Trans. Wireless Commun.}, vol.~25, pp. 15183--15198, 2026.


\bibitem{hong2026fluid}
H.~Hong \emph{et al.},
``Fluid antenna multiple access for 6G: A holistic review,''
\emph{IEEE Open J. Commun. Soc.},
vol.~7, pp.~2607--2633, Mar.~2026.

\bibitem{tutorial}
Z.~Wang \emph{et al.}, ``A tutorial on extremely large-scale {MIMO} for {6G}: Fundamentals, signal processing, and applications,'' \emph{IEEE Commun. Surveys Tuts.}, vol.~26, no.~3, pp. 1560--1605, Thirdquarter 2024.

\bibitem{new2024tutorial}
W.~K.~New {\em et al.}, ``A tutorial on fluid antenna system for 6G networks: Encompassing communication theory, optimization methods and hardware designs,'' \emph{IEEE Commun. Surveys Tuts.}, vol.~27, no.~4, pp. 2325--2377, 2024.





\bibitem{mech1}
L.~Zhu, W.~Ma, and R.~Zhang, ``Movable antennas for wireless communication: Opportunities and challenges,'' \emph{IEEE Commun. Mag.}, vol.~62, no.~6, pp. 114--120, Jun.~2024.

\bibitem{mech2}
A.~Zhuravlev, V.~Razevig, S.~Ivashov, A.~Bugaev, and M.~Chizh, ``Experimental simulation of multi-static radar with a pair of separated movable antennas,'' in \emph{Proc. IEEE Int. Conf. Microw., Commun., Antennas Electron. Syst. (COMCAS)}, Tel Aviv, Israel, Nov.~2015, pp.~1--5.

\bibitem{mech3}
X.~Li, Y.~Zhou, Z.~Shen, B.~Song, and S.~Li, ``Using a moving antenna to improve {GNSS/INS} integration performance under low-dynamic scenarios,'' \emph{IEEE Trans. Intell. Transp. Syst.}, vol.~23, no.~10, pp. 17717--17728, Oct.~2022.

\bibitem{mech4}
X.~Shao and R.~Zhang, ``{6DMA} enhanced wireless network with flexible antenna position and rotation: Opportunities and challenges,'' \emph{IEEE Commun. Mag.}, vol.~63, no.~4, pp. 121--128, Apr.~2025.

\bibitem{liq1}
G.~J. Hayes, J.-H.~So, A.~Qusba, M.~D. Dickey, and G.~Lazzi, ``Flexible liquid metal alloy ({EGaIn}) microstrip patch antenna,'' \emph{IEEE Trans. Antennas Propag.}, vol.~60, no.~5, pp. 2151--2156, May~2012.

\bibitem{liq2}
A.~M. Morishita, C.~K.~Y. Kitamura, A.~T. Ohta, and W.~A. Shiroma, ``A liquid-metal monopole array with tunable frequency, gain, and beam steering,'' \emph{IEEE Antennas Wireless Propag. Lett.}, vol.~12, pp. 1388--1391, 2013.


\bibitem{liq3}
Y.~Huang, L.~Xing, C.~Song, S.~Wang, and F.~Elhouni, ``Liquid antennas: Past, present and future,'' \emph{IEEE Open J. Antennas Propag.}, vol.~2, pp. 473--487, 2021.





\bibitem{pixel1}
J.~Zhang, J.~Rao, Z.~Li, Z.~Ming, C.-Y.~Chiu, K.-K.~Wong, {\em et al.}, ``A novel pixel-based reconfigurable antenna applied in fluid antenna systems with high switching speed,'' \emph{IEEE Open J. Antennas Propag.}, vol.~6, no.~1, pp. 212--228, 2024.


\bibitem{pixel2}
B.~Liu, T.~Wu, K.~K.~Wong, H.~Wong, and K.~F.~Tong,
``Wideband pixel-based fluid antenna system: An antenna design for smart city,''
\emph{IEEE Internet Things J.},
vol.~13, no.~4, pp.~6850--6862, Nov.~2025.

\bibitem{pixel3}
K.-K.~Wong, C.~Wang, S.~Shen, C.-B.~Chae, and R.~Murch, ``Reconfigurable pixel antennas meet fluid antenna systems: A paradigm shift to electromagnetic signal and information processing,'' \emph{IEEE Wireless Commun.}, vol.~33, no.~1, pp. 191--198, Feb.~2026.



\bibitem{Tong2025}
K.~F.~Tong, B.~Liu, and K.~K.~Wong,
``Designs and challenges in fluid antenna system hardware,''
\emph{Electronics},
vol.~14, no.~7, Art.~no.~1458, Apr.~2025.


\bibitem{pro1}
Y.~Shen \emph{et al.},
``Design and experimental validation of mmWave surface-wave-enabled fluid antennas for future wireless communications,''
\emph{IEEE Antennas Wireless Propag. Lett.},
vol.~25, no.~4, pp.~1467--1471, Apr.~2026.


\bibitem{pro2}
J.~O.~Martinez, B.~G.~Guzman, and A.~G.~Armada, ``Correlation and drop velocity in fluid antenna systems: Modeling and performance,'' \emph{IEEE J. Sel. Areas Commun.}, vol.~44, pp. 1322--1334, 2026.


\bibitem{pro3}
R.~Wang {\em et al.}, ``Electromagnetically reconfigurable fluid antenna system for wireless communications: Design, modeling, algorithm, fabrication, and experiment,'' \emph{IEEE J. Sel. Areas Commun.}, vol.~44, pp. 1464--1479, 2026.

\bibitem{pro4}
S.~Zhang, Y.~Zhang, H.~Hashida, Y.~C.~Eldar, M.~Di~Renzo, and B.~Di, ``Fluid antenna systems enabled by reconfigurable holographic surfaces: Beamforming design and experimental validation,'' \emph{IEEE J. Sel. Areas Commun.}, vol.~44, pp. 1417--1431, 2026.







\bibitem{FAS_ER}
K.~K. Wong, A.~Shojaeifard, K.-F. Tong, and Y.~Zhang, ``Performance limits of fluid antenna systems,'' \emph{IEEE Commun. Lett.}, vol.~24, no.~11, pp. 2469--2472, Nov. 2020.

\bibitem{nakagami1}
L.~Tlebaldiyeva, G.~Nauryzbayev, S.~Arzykulov, A.~Eltawil, and T.~Tsiftsis, ``Enhancing {QoS} through fluid antenna systems over correlated Nakagami-$m$ fading channels,'' in \emph{Proc. IEEE Wireless Commun. Netw. Conf. (WCNC)}, Austin, TX, USA, Apr. 2022, pp. 78--83.
\bibitem{nakagami2}
J.~D. Vega-Sánchez, L.~Urquiza-Aguiar, M.~C.~P. Paredes, and D.~P.~M. Osorio, ``A simple method for the performance analysis of fluid antenna systems under correlated Nakagami-$m$ fading,'' \emph{IEEE Wireless Commun. Lett.}, vol.~13, no.~2, pp. 377--381, Feb. 2024.
\bibitem{rician}
J.~Huangfu {\em et al.}, ``Performance analysis of fluid antenna system under spatially-correlated {Rician} fading channels,'' \emph{IEEE Trans. Wireless Commun.}, vol.~25, pp. 1394--1407, 2026.
\bibitem{Alvim2024}
P.~D.~Alvim {\em et al.}, ``On the performance of fluid antennas systems under {$\alpha$-$\mu$} fading channels,'' \emph{IEEE Wireless Commun. Lett.}, vol.~13, no.~1, pp. 108--112, Jan.~2024.

\bibitem{Beaulieu2011}
N.~C.~Beaulieu and K.~T.~Hemachandra, ``Novel simple representations for {Gaussian} class multivariate distributions with generalized correlation,'' \emph{IEEE Trans. Inf. Theory}, vol.~57, no.~12, pp. 8072--8083, Dec.~2011.

\bibitem{Khammassi2023}
M.~Khammassi, A.~Kammoun, and M.-S.~Alouini, ``A new analytical approximation of the fluid antenna system channel,'' \emph{IEEE Trans. Wireless Commun.}, vol.~22, no.~12, pp. 8843--8858, Dec.~2023.

\bibitem{copula1}
F.~Rostami~Ghadi, K.-K.~Wong, F.~J.~L\'{o}pez-Mart\'{i}nez, and K.-F.~Tong, ``Copula-based performance analysis for fluid antenna systems under arbitrary fading channels,'' \emph{IEEE Commun. Lett.}, vol.~27, no.~11, pp. 3068--3072, Nov.~2023.
\bibitem{copula2}
R.~Xu, Y.~Ye, X.~Chu, G.~Lu, F.~Rostami~Ghadi, and K.-K.~Wong, "{Gaussian} copula-based outage performance analysis of fluid antenna systems: Channel coefficient- or envelope-level correlation matrix?," \emph{IEEE Wireless Commun. Lett.}, vol.~15, pp. 465--469, 2026.
\bibitem{copula3}
Y.~Hou {\em et al.}, ``A copula-based approach to performance analysis of fluid antenna system with multiple fixed transmit antennas,'' \emph{IEEE Wireless Commun. Lett.}, vol.~13, no.~2, pp. 501--504, Feb.~2024.

\bibitem{block}
P.~Ram\'{i}rez-Espinosa, D.~Morales-Jimenez, and K.-K.~Wong, ``A new spatial block-correlation model for fluid antenna systems,'' \emph{IEEE Trans. Wireless Commun.}, vol.~23, no.~11, pp. 15829--15843, Nov.~2024.

\bibitem{block1}
X.~Lai, T.~Wu, and L.~Mai, ``A variable block-correlation model for fluid antenna systems,'' in \emph{Proc. IEEE 26th Int. Workshop Signal Process. Artif. Intell. Wireless Commun. (SPAWC)}, Surrey, United Kingdom, 2025.
\bibitem{block2}
X.~Lai {\em et al.}, ``Revisiting spatial block-correlation model for fluid antenna systems: From constant to variable correlations,'' \emph{IEEE J. Sel. Areas Commun.}, vol.~44, pp. 1335--1351, 2026.
\bibitem{block3}
T.~Wu {\em et al.}, ``Variable block-correlation modeling and optimization for secrecy analysis in fluid antenna systems,'' \emph{IEEE Trans. Wireless Commun.}, vol.~25, pp. 15069--15085, 2026.


\bibitem{marcumq}
M.~K. Simon, \emph{Probability distributions involving Gaussian random variables: A handbook for engineers and scientists}. Springer, 2002.

\bibitem{simon2005digital}
M.~K. Simon and M.-S. Alouini, \emph{Digital Communication over Fading Channels}. 2nd ed. New York, NY, USA: John Wiley \& Sons, 2005.
\bibitem{integral}
I.~S. Gradshteyn and I.~M. Ryzhik, \emph{Table of Integrals, Series and Products}. New York: Academic Press, 6th ed, 2000.

\bibitem{Alouini1999}
M.-S.~Alouini and A.~Goldsmith, ``Capacity of {Rayleigh} fading channels under different adaptive transmission and diversity-combining techniques,'' \emph{IEEE Trans. Veh. Technol.}, vol.~48, no.~4, pp. 1165--1181, Jul.~1999.


\bibitem{wang2003simple}
Z.~Wang and G.~B. Giannakis, ``A simple and general parameterization quantifying performance in fading channels,'' \emph{IEEE Trans. Commun.}, vol.~51, no.~8, pp. 1389--1398, Aug. 2003.




\bibitem{abramowitz2006handbook}
M.~Abramowitz and I.~A. Stegun, \emph{Handbook of Mathematical Functions: With Formulas, Graphs, and Mathematical Tables}. Washington, DC, USA: National Bureau of Standards, 2006.

\end{thebibliography}


\end{document}